\def\showauthornotes{0}
\def\showtableofcontents{0}
\def\showkeys{0}
\def\showdraftbox{0}
\def\showcolorlinks{1}
\def\usemicrotype{1}
\def\showfixme{0}
\def\focs{0}
\newtheorem{theorem}{Theorem}[section]
\newtheorem*{theorem*}{Theorem}
\newtheorem{proposition}[theorem]{Proposition}
\newtheorem*{proposition*}{Proposition}
\newtheorem{lemma}[theorem]{Lemma}
\newtheorem*{lemma*}{Lemma}
\newtheorem{corollary}[theorem]{Corollary}
\newtheorem*{conjecture*}{Conjecture}
\newtheorem*{fact*}{Fact}
\newtheorem*{hypothesis*}{Hypothesis}
\theoremstyle{definition}
\newtheorem{definition}[theorem]{Definition}
\newtheorem{question}[theorem]{Question}
\theoremstyle{remark}
\newtheorem{claim}[theorem]{Claim}
\newtheorem*{claim*}{Claim}
\newtheorem{remark}[theorem]{Remark}
\newtheorem*{remark*}{Remark}
\newtheorem{observation}[theorem]{Observation}
\newtheorem*{observation*}{Observation}
\newcommand{\savehyperref}[2]{\texorpdfstring{\hyperref[#1]{#2}}{#2}}
\newcommand{\Sref}[1]{\hyperref[#1]{\S\ref*{#1}}}
\let\nfrac=\nicefrac
\newcommand{\Authornote}[2]{{\sffamily\small\color{red}{[#1: #2]}}}
\newcommand{\Authorcomment}[2]{{\sffamily\small\color{gray}{[#1: #2]}}}
\newcommand{\Authorstartcomment}[1]{\sffamily\small\color{gray}[#1: }
\newcommand{\Authorfnote}[2]{\footnote{\color{red}{#1: #2}}}
\newcommand{\Authorfixme}[1]{\Authornote{#1}{\textbf{??}}}
\newcommand{\Authormarginmark}[1]{\marginpar{\textcolor{red}{\fbox{\Large #1:!}}}}
\newcommand{\Authornote}[2]{}
\newcommand{\Authorcomment}[2]{}
\newcommand{\Authorstartcomment}[1]{}
\newcommand{\Authorfnote}[2]{}
\newcommand{\Authorfixme}[1]{}
\newcommand{\Authormarginmark}[1]{}
\newcommand{\Dnote}{\Authornote{D}}
\newcommand{\Jnote}{\Authornote{J}}
\newcommand{\Pnote}{\Authornote{P}}
\newcommand{\Paren}[1]{\left(#1\right)}
\newcommand{\abs}[1]{\lvert#1\rvert}
\newcommand{\Abs}[1]{\left\lvert#1\right\rvert}
\newcommand{\card}[1]{\lvert#1\rvert}
\newcommand{\set}[1]{\{#1\}}
\newcommand{\Bigset}[1]{\Big\{#1\Big\}}
\newcommand{\norm}[1]{\lVert#1\rVert}
\newcommand{\normo}[1]{\norm{#1}_1}
\newcommand{\iprod}[1]{\langle#1\rangle}
\newcommand{\kldiv}[2]{D(#1\,\Vert\,#2)}
\newcommand{\Esymb}{\mathbb{E}}
\newcommand{\Psymb}{\mathbb{P}}
\DeclareMathOperator*{\E}{\Esymb}
\DeclareMathOperator*{\ProbOp}{\Psymb}
\renewcommand{\Pr}{\ProbOp}
\newcommand{\textparen}[1]{\text{(#1)}}
\newcommand{\because}[1]{\textparen{because #1}}
\renewcommand{\because}[1]{\textparen{because #1}}
\newcommand{\bits}{\{0,1\}}
\newcommand{\sbits}{\{-1,1\}}
\newcommand{\sm}{\setminus}
\newcommand{\defeq}{\stackrel{\mathrm{def}}=}
\newcommand{\seteq}{\mathrel{\mathop:}=}
\newcommand{\from}{\colon}
\newcommand{\Bigmid}{~\Big|~}
\newcommand{\mper}{\,.}
\newcommand{\mcom}{\,,}
\newcommand\bdot\bullet
\newcommand{\Ind}{\mathbb I}
\DeclareMathOperator{\opt}{opt}
\DeclareMathOperator{\cone}{cone}
\DeclareMathOperator{\supp}{supp}
\DeclareMathOperator{\Sym}{\bm{Sym}}
\DeclareMathOperator{\Stab}{Stab}
\DeclareMathOperator{\rank}{rank}
\newcommand{\N}{\mathbb N}
\newcommand{\R}{\mathbb R}
\newcommand{\Rnn}{\R_+}
\newcommand{\problemmacro}[1]{\texorpdfstring{\textsc{#1}}{#1}\xspace}
\newcommand{\maxcut}{\problemmacro{Max Cut}}
\newcommand{\vertexcover}{\problemmacro{Vertex Cover}}
\newcommand{\maxtwosat}{\problemmacro{Max $2$-Sat}}
\newcommand{\maxthreesat}{\problemmacro{Max $3$-Sat}}
\newcommand{\maxclique}{\problemmacro{Max Clique}}
\newcommand{\cF}{\mathcal F}
\newcommand{\cG}{\mathcal G}
\newcommand{\cL}{\mathcal L}
\newcommand{\cX}{\mathcal X}
\renewcommand{\leq}{\leqslant}
\renewcommand{\le}{\leqslant}
\renewcommand{\geq}{\geqslant}
\renewcommand{\ge}{\geqslant}
\let\epsilon=\varepsilon
\numberwithin{equation}{section}
\newcommand{\MYstore}[2]{%
  \global\expandafter \def \csname MYMEMORY #1 \endcsname{#2}%
}
\newcommand{\MYload}[1]{%
  \csname MYMEMORY #1 \endcsname%
}
\newcommand{\MYnewlabel}[1]{%
  \newcommand\MYcurrentlabel{#1}%
  \MYoldlabel{#1}%
}
\newcommand{\MYdummylabel}[1]{}
\newcommand{\torestate}[1]{%
  \let\MYoldlabel\label%
  \let\label\MYnewlabel%
  #1%
  \MYstore{\MYcurrentlabel}{#1}%
  \let\label\MYoldlabel%
}
\newcommand{\restatetheorem}[1]{%
  \let\MYoldlabel\label
  \let\label\MYdummylabel
  \begin{theorem*}[Restatement of \prettyref{#1}]
    \MYload{#1}
  \end{theorem*}
  \let\label\MYoldlabel
}
\newcommand{\restatelemma}[1]{%
  \let\MYoldlabel\label
  \let\label\MYdummylabel
  \begin{lemma*}[Restatement of \prettyref{#1}]
    \MYload{#1}
  \end{lemma*}
  \let\label\MYoldlabel
}
\newcommand{\restateprop}[1]{%
  \let\MYoldlabel\label
  \let\label\MYdummylabel
  \begin{proposition*}[Restatement of \prettyref{#1}]
    \MYload{#1}
  \end{proposition*}
  \let\label\MYoldlabel
}
\newcommand{\restatefact}[1]{%
  \let\MYoldlabel\label
  \let\label\MYdummylabel
  \begin{fact*}[Restatement of \prettyref{#1}]
    \MYload{#1}
  \end{fact*}
  \let\label\MYoldlabel
}
\newcommand{\restate}[1]{%
  \let\MYoldlabel\label
  \let\label\MYdummylabel
  \MYload{#1}
  \let\label\MYoldlabel
}
\newcommand{\addreferencesection}{
  \phantomsection
  \addcontentsline{toc}{section}{References}
}
\newcommand{\sse}{\subseteq}
\newcommand{\e}{\epsilon}
\newcommand{\eps}{\epsilon}
\let\origparagraph\paragraph
\renewcommand{\paragraph}[1]{\origparagraph{#1.}}
\let\pref=\prettyref
\newcommand{\inst}{\Im}
\renewcommand{\Ind}{\mathds 1}
\let\Rnn\varvarRnn
\newcommand{\wt}{\mathrm{wt}}
\newcommand{\symdif}{\oplus}
\newcommand{\orb}{\mathrm{Orb}}
\DeclareMathOperator*{\pE}{\tilde \E}
\title{\bfseries Approximate constraint satisfaction \\ %
  requires large LP relaxations}
\author{Siu On Chan\footnote{Microsoft Research New England}   \and \hspace*{-0.15in} James R. Lee\footnote{University of Washington}  \hspace*{-0.15in}\and Prasad Raghavendra\footnote{U. C. Berkeley} \and \hspace*{-0.15in} David Steurer\footnote{Cornell University}}
\date{}
\begin{document}

\maketitle
\thispagestyle{empty}

\begin{abstract}
  We prove super-polynomial lower bounds on the size of linear
  programming relaxations for approximation versions of constraint
  satisfaction problems.
We show that for
these problems,
polynomial-sized linear programs are
no more powerful than 
 programs arising from a constant number of
rounds of the Sherali--Adams hierarchy.

In particular,  any polynomial-sized
linear program for \maxcut has an integrality gap of $\frac12$
and any such linear program for \maxthreesat has an integrality gap of $\frac78$.
\end{abstract}



\ifnum\showtableofcontents=1
{
\tableofcontents
\thispagestyle{empty}
 }
\fi


\renewcommand{\P}{\mathrm P}
\newcommand{\NP}{\mathrm{NP}}

\section{Introduction}
\label{sec:introduction}

Linear programming is one of the most powerful tools known
for finding approximately optimal solutions to $\NP$-hard problems.
We refer to the books \cite{VaziraniBook01,WilliamsonShmoysBook11}
which each contain a wealth of examples.
If $\P \neq \NP$, then for many such problems
we do not expect polynomial-sized linear programs (LPs) to compute
arbitrarily good approximations to the optimal solution.
(More formally, if $\NP \nsubseteq \P/\mathrm{poly}$, then such LPs
cannot exist \cite{Yannakakis91}.)

Thus a line of research has sought to prove lower bounds
on the efficacy of small linear programs.
The construction of integrality gaps for specific LPs
has long been a topic of interest in approximation algorithms.
Arora, Bollob\'as, and Lov\'asz \cite{AroraBollobasLovasz02} initiated
a more systematic study; they explored the limitations of LPs arising
from lift-and-project hierarchies like those of Lov\'asz and Schrijver \cite{LovaszSchrijver1991}
and Sherali and Adams \cite{SheraliAdams1990}.
There has now been an extensive amount of progress
made in this area; one can see a sampling in the section
on previous work.

Arguably, the ultimate goal of this study is to prove unconditional lower
bounds for {\em every} sufficiently small LP.  Since linear programming
is $P$-complete under various notions of reduction, this
would require proving that $\NP$ does not have polynomial-size circuits (see, e.g.,
the discussion in \cite{Yannakakis91}).
But one could still hope to complete this program
for LPs that use the natural encoding
of the underlying combinatorial problem.

We make progress toward this goal for the class of constraint satisfaction problems (CSPs).
For instance, we prove that every polynomial-sized LP for \maxcut has an integrality gap of $\frac12$,
answering a question from \cite{BraunFPS12}.
As another example, every such LP   for \maxthreesat has an integrality gap of $\frac78$,
and every such LP for \maxtwosat has an integrality gap of $\frac34$.
In fact, in both cases these integrality gaps hold for families of LPs of size up to $\cramped{n^{o(\frac{\log n}{\log \log n})}}$.

Corresponding
upper bounds for all three problems can be achieved by simple polynomial-sized LPs.  For \maxthreesat,
a $\frac78$-approximation is best-possible assuming $\P \neq \NP$ \cite{Hastad01}.  For \maxcut,
the seminal SDP-based algorithm of Goemans and Williamson \cite{GoemansWilliamson95}
achieves a $0.878$-approximation.  In this case, our result yields
a strict separation between the power of polynomial-sized LPs and SDPs
for a natural optimization problem.
Interestingly, even a simple spectral algorithm can do strictly better than $1/2$ for \maxcut \cite{Trevisan2012}.
\Dnote{In this sense, our results shows that LP relaxations fail capture arguments based on eigenvalues in a strong way.
OR: our results give a strong separation between combinatorial  / LP algorithms and spectral / SDP algorithms.}

To establish these lower bounds, we show that for approximating CSPs,
polynomial-sized LPs are exactly as powerful as those
programs arising from $O(1)$ rounds of the Sherali--Adams hierarchy.
We are then able to employ the powerful Sherali--Adams gaps that
appear in prior work.  This offers a potential framework
for understanding the power of linear programs
for many problems by relating their expressive power
to that of the very explicit Sherali--Adams hierarchy.
\Dnote{maybe something stronger:
our results show that for the purpose of approximating constraint satisfaction problems Sherali--Adams LP algorithms are universal among all possible LP algorithms of comparable complexity.}

\medskip

In \pref{sec:maxcut}, we discuss our approach
for the specific example of \maxcut, including
the class of LPs to which our results apply.
\pref{sec:preliminaries} is devoted
to a review of CSPs and their linear relaxations.
There we explain our basic approach to proving lower bounds
by exhibiting an appropriate separating hyperplane.  We also review
the Sherali--Adams hierarchy for CSPs.
In \pref{sec:LPs},  we present the technical components
of our approach, as well as the proof of our main theorem.

Finally, \pref{sec:symm-formulations} contains an illustrative
discussion of how Sherali--Adams gap examples
can be used to construct corresponding gaps for {\em symmetric} LPs.
This connection is quantitatively stronger than our result for general LPs.
We refer to \pref{sec:conclusion} for
a discussion of future directions.

\medskip
\noindent
{\bf Recent work.}
Since initial publication of this manuscript,
there has been substantial followup work building
on the ideas presented here.
The papers \cite{LRST14,FSP14} establish a connection between
{\em symmetric} semidefinite programs and the Sum-of-Squares
hierarchy by analogy with our work in \pref{sec:symm-formulations}.
In \cite{LRS15}, a connection between {\em general}
semidefinite extended formulations and the Sum-of-Squares
hierarchy is established; in particular, the authors
prove exponential lower bounds on the semidefinite
extension complexity of explicit polytopes (like the TSP polytopes).
Finally, our models for approximation via linear programs
are extended and refined in the work \cite{BPZ15};
the authors show that a suitable notion of reduction within
the model allows one to derive lower bounds for
additional problems (other than CSPs).

\subsection{History and context}

\paragraph{Extended formulations}
In a seminal paper, Yannakakis \cite{Yannakakis91} proved that every
symmetric LP (i.e., one whose formulation is invariant under
permutations of the variables) for TSP has exponential size.
Only recently was a similar lower bound given for general LPs.
More precisely, Fiorini, et\ al.\ \cite{FMPTW12} show that
the extension complexity of the TSP polytope is at least $2^{\Omega(\sqrt{n})}$
for $n$-vertex graphs.

Braun, et\ al.\ \cite{BraunFPS12} expand the notion of
extension complexity to include approximation problems
and show that approximating \maxclique within $O(n^{1/2-\e})$
requires LPs of size $2^{\Omega(n^{\e})}$.
Building on that work, Braverman and Moitra \cite{BravermanMoitra13}
show that approximating \maxclique within $O(n^{1-\e})$
requires LPs of size $2^{\Omega(n^{\eps})}$.
We remark that the encoding of \maxclique used
in the later two works is somewhat lacking.
Specifically, these lower bounds do not encompass, for instance,
standard relaxations for \maxclique, including
those given by the Sherali-Adams hierarchy.

These three latter papers all use Yannakakis' connection
between extension complexity and non-negative rank (see \cite{FMPTW12}
for a detailed discussion).  They are based on
increasingly more sophisticated analyses of a single family
of slack matrices first defined in \cite{FMPTW12} (and extended
to the approximation setting by \cite{BraunFPS12}).
Closely related slack matrices are employed in a recent paper of
Rothvoss \cite{Rothvoss14} to show exponential lower bounds
on the extension complexity of the matching polytope.
A significant contribution of the present work is that the
connection between general LPs and the Sherali--Adams hierarchy
allows one to employ a much richer family
of hard instances.

\Dnote{maybe explain at the example of vertex cover / independent set.
what kind of LP relaxations are captured by the framework and what it means to encode the problem in different ways.}

\paragraph{LP and SDP hierarchies}
As mentioned previously, starting with the works \cite{AroraBollobasLovasz02,ABLT2006}, the
efficacy of LP and SDP hierarchies for approximation problems has been
extensively studied.  We refer to the survey of Laurent \cite{laurent2003comparison}
for a discussion of the various hierarchies and their relationships.

We mention a few results that will be quite useful for us.
Fern\'andez de la Vega and Mathieu \cite{FernandezKenyonMathieu07}
showed that for any fixed $\e > 0$ and $k$,
\maxcut has an integrality gap of $\frac12+\e$ even
after $k$ rounds of the Sherali--Adams hierarchy.
In a paper of Charikar, Makarychev, and Makarychev \cite{charikar2009integrality},
it is shown that \maxcut and \vertexcover have integrality
gaps of $\frac12+\e$ and $2-\e$, respectively, for $n^{\Omega(\e)}$ rounds
of the Sherali--Adams hierarchy.

In work of Schoenebeck \cite{schoenebeck2008linear}, tight bounds are given on
the number of rounds needed to approximate $k$-CSPs in the Lasserre
hierarchy (which, in particular, is stronger than the Sherali--Adams hierarchy).
For instance, he shows that for every $\e > 0$, \maxthreesat has a
$\frac{7}{8}+\e$ integrality gap even after $\Omega(n)$ rounds.
One should consult also the much earlier work of Grigoriev \cite{Grigoriev2001}
which achieves an equivalent family of lower bounds stated in the dual
setting of Positivstellensatz proof systems.
There are also Sherali--Adams integrality gaps for CSPs with a pairwise
independent predicate, due to Benabbas et\ al.\ \cite{BenabbasGMT12}.

\paragraph{Strong separation between nonnegative rank and smooth nonnegative rank}
We remark that all previous lower bounds for nonnegative rank (at least in the context of extended formulations) are robust with respect to small multiplicative perturbations \cite{Rothvoss14,FMPTW12,BraunFPS12,BravermanMoitra13}.
Concretely, if we define the \(\e\)-smooth nonnegative rank of a matrix \(A\) as \[
{\mathrm{rank}_{+,\e}}(A) \seteq \min  \left\{ {\mathrm{rank}_{+}}(A') \mid (1-\e) A_{ij} \le A'_{ij} \le (1+\e)A_{ij} \right\}
\mcom
\] then all previous lower bounds for nonnegative rank also lower bound the \(\e\)-smooth version for some absolute constant \(\e>0\).
A related generalization of nonnegative rank is approximate nonnegative rank that allows additive instead of multiplicative error.
\footnote{All previous lower bounds also hold for this generalization of nonnegative rank.
However, some of the lower bound arguments do not apply to the additive-error setting.
(For example, arguments that rely on the zero / non-zero pattern of the matrix.)}
This version of nonnegative rank is equivalent to the smooth rectangle bound \cite{kol2014approximate}.

In contrast, the matrices studied in this work turn out to have only polynomial approximate and smooth nonnegative rank.
In this sense, our superpolynomial lower bounds on the nonnegative rank of these matrices give the first separation between nonnegative rank and smooth nonnegative rank.
See \pref{sec:smoothrank} for a discussion.


%

\subsection{Outline:  \maxcut}
\label{sec:maxcut}
\newcommand{\sa}{\mathtt{SA}}

We now present the basic details of our approach applied to the \maxcut problem.
To this end, consider a graph $G=(V,E)$ with $|V|=n$.  For any $S \subseteq V$, we use
$$
G(S) \defeq \frac{|E(S,\bar S)|}{|E|}
$$
to denote the fraction of edges of $G$ crossing the cut $(S,\bar S)$.
The maximum cut value of $G$ is $\opt(G) = \max_{S \subseteq V} G(S)$.

\medskip
\noindent
{\bf The standard LP.}
To construct an LP for computing (or approximating) $\opt(G)$,
it is natural to introduce variables $x = (x_1, x_2, \ldots, x_n) \in \sbits^n$
corresponding to the vertices of $G$.
One can then write, for instance,
$$
\opt(G) = \max_{x \in \sbits^n} \frac{1}{|E|} \sum_{\{i,j\} \in E} \frac{1-x_i x_j}{2}\,.
$$
To convert this computation into a linear program, we need to {\em linearize} the
objective function.

The usual way is to introduce new LP variables $y = (y_{i,j}) \in \mathbb R^{n \choose 2}$ meant to represent
the quantities $(1-x_ix_j)/2$.
Now consider the vector ${v}_G \in \{0,1\}^{n \choose 2}$ such that $(v_G)_{\{i,j\}} = 1$ precisely
when $\{i,j\} \in E$.  Given that we have linearized both the graph $G$ and the cut variable $x$,
we can consider the LP relaxation
$$
\mathcal L(G) = \max_{y \in P} \,\langle v_G, y \rangle\,,
$$
where $P$ is any polytope containing all the vectors $y$ such that $y_{i,j} = (1-x_i x_j/2)$ for some $x \in \sbits^n$.
The standard relaxation corresponds to a polytope $P$ defined by the constraints $\{ 0 \leq y_{i,j} \leq 1 : i,j \in V \}$ and
\[\left\{ y_{i,j} \leq y_{i,k} + y_{k,j},\,\, y_{i,j} + y_{i,k} + y_{k,j} \leq 2  : i,j,k \in V \right\}\,.\]
Clearly $P$ is characterized by $O(n^3)$ inequalities.

\medskip
\noindent
{\bf Arbitrary linearizations.}  But it is important to point out that, for our purposes,
any linearization of the natural formulation of \maxcut suffices.
We only require that
there is a number $D \in \mathbb N$ such that:
\begin{enumerate}
\item For every graph $G$, we have a vector $v_G \in \mathbb R^D$.
\item For every cut $S \subseteq V$, we have a vector $y_S \in \mathbb R^D$.
\item For all graphs $G$ and vectors $y_S$, the condition $G(S) = \langle v_G, y_S \rangle$ holds.
\end{enumerate}
Now any polytope $P \subseteq \mathbb R^D$, such that $y_S \in P$ for every $S \subseteq V$,
yields a viable LP relaxation:  $\mathcal L(G) = \max_{y \in P} \langle v_G, y\rangle$.
The {\em size} of this relaxation is simply the number of facets of $P$,
i.e. the number of linear inequalities needed to specify $P$.
\Dnote{this discussion is unclear.
we are talking about two different things.
the ``formulation'' of a problem and the ``linearization'' of the formulation.
the term ``formulation'' is not defined but important.
when we say that all linearizations are equivalent it's a bit subtle.
if we pick a fixed linearization then we might need to allow for projections.}

\begin{remark}
We stress that the polytope $P$ depends only on the input size.  This
is akin to lower bounds in non-uniform models of computation like
circuits wherein there is a single circuit for all inputs of a certain
size.  The input graph $G$ is used only to define the objective
function being maximized.  In other words, the variables and constraints of the
linear program are fixed for each input size while the objective
function is defined by the input.  To the best of our knowledge, all
linear and semidefinite programs designed for approximating max-CSP problems are subsumed by
relaxations of this nature.
\end{remark}

In \pref{sec:LPs}, we prove that every such relaxation of polynomial size
has an integrality gap of $\frac12$ for \maxcut.
We now give an informal outline of the proof.

\medskip
\noindent
{\bf Proving a lower bound.}
In \pref{thm:lp-characterization},
we recall that if there is an LP relaxation $\mathcal L$ of size $R$, then
a simple application of Farkas' Lemma shows that
there are non-negative functions $q_1, \ldots, q_R : \sbits^n~\to~\mathbb R_{\geq 0}$
such that for every graph $G$,
there are coefficients $\lambda_0, \lambda_1, \ldots, \lambda_R \geq 0$ satisfying
\begin{equation}\label{eq:combination}
\mathcal L(G) - G(x) = \lambda_0 + \lambda_1 q_1(x) + \cdots + \lambda_R q_R(x)\,.
\end{equation}
for all $x \in \sbits^n$.
(Note that we have earlier viewed $G$ as a function on cuts and we now view
it as a function on $\sbits^n$ by associating these vectors with cuts.)

One should think of \eqref{eq:combination} as saying that
$\mathcal L(G)-G \in \cone(\mathds{1}, q_1, q_2, \ldots, q_R)$, where the latter object
is the cone generated by $\{\mathds{1},q_1, q_2, \ldots, q_R\}$
inside the Hilbert space $L^2(\sbits^n)$ of real-valued functions, and $\mathds{1}$ denotes the function
that is identically 1.
These functions $q_i : \sbits^n \to \mathbb R_{\geq 0}$ encode the slack of
each constraint of the LP.  Thus if the $i$th LP constraint is of the form
$\langle A_i ,z\rangle \leq b_i$, then $q_i(x) = b_i - \langle A_i, y_{S_x}\rangle$ where
$y_{S_x}$ is the cut vector corresponding to $x \in \sbits^n$.
\medskip

Consider some $m \ll n$.
The $d$-round Sherali--Adams relaxation for an $m$-vertex graph $G_0$ has value
$\sa_d(G_0) \leq c$ if and only if there exist a family of non-negative $d$-juntas $\{f_i~:~\sbits^m \to \mathbb R_{\geq 0}\}$
such that
\begin{equation}\label{eq:SAexp}
c-G_0 = \sum_i \lambda_i f_i\,,
\end{equation}
where $\lambda_i \geq 0$ for each $i$.
We recall that a $d$-junta is a function whose value
depends on at most $d$ of its inputs.
See \pref{sec:Sherali--Adams-lp} for an explanation of \eqref{eq:SAexp}.

\Jnote{Better explanation of this needed e.g. Toni's question}

In particular, if $G_0$ is such that $\sa_d(G_0) > c$, then
no such representation \eqref{eq:SAexp} with $d$-juntas can exist.  Our goal is to use
\eqref{eq:combination}
to find a graph $G$ on $n$ vertices such that $\opt(G)=\opt(G_0)$, and
such that $G_0$ has a representation of the form \eqref{eq:SAexp} with $c=\mathcal L(G)$.
This will show that $\mathcal L(G) \geq \sa_d(G_0)$, completing our proof.
(Recall that since we are dealing with maximization problems and $\opt(G)=\opt(G_0)$,
this means that our LP is not doing better than Sherali--Adams.)

This proceeds in three steps:  First, we argue that, by a truncation argument, it suffices to consider functions $\{q_i\}$
that are sufficiently smooth.
  Then in \pref{sec:high-entr-distr}, we show that any sufficiently smooth $q_i$
can be approximated (in a certain weak sense) by a $K$-junta $q'_i$ for $K$ which may be quite large (e.g., $K=n^{0.2}$).

In \pref{sec:rand-restr-proof}, we employ a random restriction argument:  By planting the $m$-vertex instance $G_0$ at random
inside a larger graph $G$ (on $n$ vertices), we can ensure that for every $q'_i$, the set of significant coordinates
{\em when restricted to $G_0$} is much smaller; in fact, we show that with high probability
over the random planting, every such $q_i'$ has only $d$ significant
coordinates in the support of $G_0$.
Here we use crucially the fact that we have only $R$ functions $\{q_i\}$,
where $R \leq n^{\alpha d}$ for some small constant $\alpha > 0$.

In particular, applying \eqref{eq:combination} to $G$ and then restricting our attention
to the vertices in $V(G_0)$, this yields a representation of the form
\begin{equation}\label{eq:bound}
\mathcal L(G) - G_0 = \lambda_0 + \sum_{i=1}^R \lambda_i q_i|_{V(G_0)}\,,
\end{equation}
and, when restricted to $G_0$, every $q_i$ is weakly approximated by a $d$-junta $q'_i$.
More specifically, all the low-degree Fourier coefficients of $q_i-q'_i$ are small.
Now, the fact that \eqref{eq:bound} holds and each $q_i$ is approximately a $d$-junta
will yield that $\mathcal L(G) \geq \sa_d(G_0)$, taking \eqref{eq:SAexp} into consideration.
Here we remain vague, but the reader should note that this implication
would follow immediately if each $q_i|_{V(G_0)}$ were actually a $d$-junta.

This will hold true as long as the ``approximation''
does not hurt us too much.
One might think that our approximation is too weak:  We only know that $q'_i$ approximates $q_i$
on $V(G_0)$ in the low-degree part.
Now we use the fact that the $d$-round Sherali--Adams relaxation is only
capable of perceiving low-degree functions (more technically,
the $d$-round Sherali--Adams functional introduced in
\pref{sec:Sherali--Adams-lp} is a degree-$d$ multilinear polynomial).  In particular, it suffices that the low-degree parts
of $q_i$ and $q'_i$ are close.

The
ingredients are all put together in \pref{sec:putt-things-togeth},
where one can find the proof of our main theorem for general CSPs.

\section{Background} 
\label{sec:preliminaries}
\label{sec:background}

We now review the maximization versions of boolean CSPs,
their linear programming relaxations, and related issues.

Throughout the paper, for a function $f : \sbits^n \to \mathbb R$,
we write $\E f = 2^{-n} \sum_{x \in \sbits^n} f(x)$.
If $g : \sbits^n \to \mathbb R$, we denote
the inner product $\langle f,g\rangle = \E [fg]$
on the Hilbert space $L^2(\{-1,1\}^n)$.
Recall that any $f : \sbits^n \to \mathbb R$ can be written uniquely
in the Fourier basis as
$f = \sum_{\alpha \subseteq [n]} \hat f(\alpha) \chi_{\alpha}$,
where $\chi_{\alpha}(x) = \prod_{i \in \alpha} x_i$ and $\hat f(\alpha) = \langle f,\chi_{\alpha}\rangle$.
A function $f$ is called a {\em $d$-junta} for $d \in [n]$ if $f$ depends only
on a subset $S \subseteq [n]$ of coordinates with $|S| \leq d$.
In other words, $f$ can be written as $f = \sum_{\alpha \subseteq S} \hat f(\alpha) \chi_{\alpha}$.

We say that $f$ is a {\em density} if it is non-negative
and satisfies $\E f = 1$.
For such an $f$, we let $\mu_f$ denote the corresponding
probability measure on $\sbits^n$.
Observe that for any $g : \sbits^n \to \mathbb R$,
we have $\E_{x \sim \mu_f} [g(x)] = \langle f,g\rangle.$

\paragraph{Constraint Satisfaction Problems}

\newcommand{\maxpi}{\problemmacro{Max-$\Pi$}}
\newcommand{\maxthreecnf}{\problemmacro{Max-$3$CNF}}

Constraint satisfaction problems form a broad class of discrete
optimization problems that include, for example, \maxcut and \maxthreesat.
%
For simplicity of presentation, we will focus on constraint satisfaction problems with a boolean alphabet, though
similar ideas extend to larger domains (of constant size).  One can consult
\cite[\S 7]{LRS15}.

For a finite collection $\Pi=\set{P}$ of $k$-ary predicates $P\from
\sbits^k\to \bits$, we let \maxpi denote the following optimization
problem:
An instance $\inst$ consists of boolean variables $X_1,\ldots,X_n$ and
a collection of $\Pi$-predicates $P_1(X),\ldots,P_m(X)$ over
these variables.
A $\Pi$-predicate is a predicate $P_0\from \sbits^n\to \bits$ such
that $P_0(X)=P(X_{i_1},\ldots,X_{i_k})$ for some $P\in\Pi$ and {\em distinct} indices
$i_1,\ldots,i_k\in[n]$.
The objective is to find an assignment $x\in \sbits^n$ that satisfies as
many of the predicates as possible, that is, which maximizes
\begin{displaymath}
  \inst(x)\defeq \tfrac1m\sum_{i=1}^m P_i(x)\mper
\end{displaymath}
We denote the optimal value of an assignment for $\inst$ as
$\opt(\inst)=\max_{x\in\sbits^n}\inst(x)$.

\medskip

\emph{Examples:} \maxcut corresponds to the case where $\Pi$ consists of
the binary inequality predicate.
For \maxthreesat, $\Pi$ contains all eight 3-literal disjunctions,
e.g., $X_1\vee \bar X_2\vee \bar X_3$.

\newcommand{\tinst}{\tilde\inst}
\newcommand{\tx}{\tilde x}

\paragraph{Linear Programming Relaxations for CSPs}
\label{sec:line-progr-relax}

In order to write an LP relaxation for such a problem,
we need to linearize the objective function.
For $n\in \N$, let $\maxpi_n$ be the set of \maxpi instances on $n$ variables.
An {\em LP-relaxation of size $R$} for $\maxpi_n$ consists of
the following.
\begin{description}
\item[Linearization:]
  Let $D$ be a natural number.
  For every $\inst\in\maxpi_n$, we associate a vector $\tinst \in
  \R^D$ and for every assignment $x\in\sbits^n$, we associate a point
  $\tx\in\R^D$, such that $\inst(x)=\iprod{\tinst,\tx}$ for all
  $\inst\in \maxpi_n$ and all $x\in\sbits^n$.
\item[Feasible region:] A closed, convex (possibly unbounded)
	polyhedron $P\sse \R^D$ described by $R$
  linear inequalities, such that $\tx\in P$ for all assignments
  $x\in\sbits^n$.  Note that the polytope $P$ is independent of the instance
  $\inst$ of $\maxpi_n$.
\end{description}

Given an instance $\inst\in \maxpi_n$, the LP relaxation $\mathcal L$
has value \[\mathcal L(\inst)\defeq \max_{y\in P} \iprod{\tinst,y}\,.\]
Since $\tx\in P$ for all assignments $x\in\sbits^n$ and
$\iprod{\tinst,\tx}=\inst(x)$, we have $\mathcal L(\inst)\ge
\opt(\inst)$ for all instances $\inst\in\maxpi_n$.

\begin{remark}
\label{rem:linearization}
For concreteness, there is a ``universal linearization'' for CSPs that one can always use (this is sometimes referred to as the ``vertex extended formulation'').
One views
$x \mapsto \inst(x)$ as a multilinear polynomial over $\sbits^n$.
In the Fourier basis $\{\chi_{\alpha} : \alpha \subseteq [n]\}$, one would
have $\tilde \inst = \sum_{\alpha} \hat \inst(\alpha) \chi_{\alpha}$ and
$\tilde x = \sum_{\alpha} \chi_{\alpha}(x) \chi_{\alpha}$.
Note that if the $\Pi_n$ contains $k$-ary predicates,
then $\tilde \inst$ and $\tilde x$ are multilinear polynomials
of degree at most $k$.
\end{remark}

\begin{remark}
\label{rem:numvars}
Of course, in the preceding linearization, the number of variables is now $2^n$.
But if the number of defining inequalities small, one can reduce
the number of variables via an appropriate linear transformation;
see \cite{FMPTW12}.
\end{remark}

\paragraph{Symmetric Linear Programs}
A symmetric LP is one for which the linearization is
symmetric under any permutation of the input variables.
More precisely, let us suppose $\cL$ is a linear program for
$\maxpi_n$ that associates to each instance $\inst$, a linearization
$\tilde{\inst} \in \R^D$ and to every assignment $x \in \sbits^n$ a
point $\tilde{x} \in \R^D$.

Let $\Sym(n)$ denote the symmetric group on $\{1,2,\ldots,n\}$.
Note that $\Sym(n)$ acts naturally on elements $x \in \mathbb R^n$
by permutation of the coordinates.  Specifically, for a permutation
$\sigma \in \Sym(n)$ and $x \in \sbits^n$, let $\sigma x =
\left(x_{\sigma(1)},x_{\sigma(2)},\ldots, x_{\sigma(n)} \right)$.
This action extends to an
action of $\Sym(n)$ on functions $f : \sbits^n \to \mathbb R$
by defining $\sigma f(x) = f(\sigma x)$ for $\sigma \in \Sym(n)$.

We say that the linear program $\cL$ is
{\it symmetric} if the following holds:
For every permutation $\sigma \in \Sym(n)$,
there exists a corresponding permutation $\tilde{\sigma} \in \Sym(D)$
such that for every assignment $x \in \sbits^n$,
$$\widetilde{\sigma x} =\tilde{\sigma} \tilde{x} \mcom$$
and the feasible region $P \subset \R^D$ remains
invariant under the permutation $\tilde{\sigma}$ of coordinates, i.e.,
$$ \tilde{\sigma} P = P \mper$$
To the best of our knowledge, all linear and semidefinite programming
relaxations designed for approximating max-CSP problems have been
symmetric relaxations.  In general, assymetric relaxatiosn could be
much more powerful as demonstrated by Kaibel et\ al.\ \cite{KaibelPT10}
who show that asymmetric LPs can be superpolynomially smaller than
symmetric LPs for optimizing over $\log{n}$-sized partial matchings.

\medskip
\noindent
{\bf $(c,s)$-approximation.}
For $c > s \geq 0$, we say that a linear programming relaxation $\mathcal L$ for
$\maxpi_n$ is a {\em $(c,s)$-approximation} if $\mathcal L(\inst)\le
c$ for all instances $\inst\in \maxpi_n$ with $\opt(\inst)\le s$.
We also say that $\mathcal L$ achieves an {\em $\alpha$-factor
approximation} if $\mathcal L(\inst)\le \alpha \opt(\inst)$ for all
$\inst\in\maxpi_n$.



The following theorem is inspired by Yannakakis's characterization of
exact linear programming relaxations.
It appears in similar form in previous works
\cite{Pashkovich12} and \cite[Thm. 1]{BraunFPS12}.
For simplicity, we specialize it here for constraint satisfaction problems.
%

\begin{theorem}
\label{thm:lp-characterization}
  For every $c,s \in [0,1]$,
  there exists an LP relaxation of size at most $R$ that
  achieves a $(c,s)$-approximation for $\maxpi_n$ {\em if and only if} there
  exist non-negative
  functions $q_1,\ldots,q_R\from \sbits^n\to\Rnn$ such that for every instance $\inst\in
  \maxpi_n$ with $\opt(\inst)\le s$, the function $c-\inst$ is a
  nonnegative combination of the functions $q_1,\ldots,q_R$ and $\mathds{1}$, i.e.
  \begin{equation}\label{eq:instrep}
      c-\inst \in \Bigset{\lambda_0 + \sum_{i=1}^R \lambda_i q_i\Bigmid \lambda_0, \lambda_1,\ldots,\lambda_R\ge 0}\mper
  \end{equation}
  Moreover, if the LP relaxation is symmetric then there exist
  nonnegative functions $q_1, \ldots, q_R : \sbits^n \to \Rnn$ witnessing \eqref{eq:instrep}
  and such that $\left\{q_1, \ldots, q_R\right\}$ is closed
  under the action of $\Sym(n)$.
 \end{theorem}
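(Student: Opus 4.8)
The plan is to prove both implications by linear-programming duality, so that the statement becomes the $(c,s)$-approximate analogue of Yannakakis's factorization theorem, specialized to max-CSPs. For the ``only if'' direction I would start from an LP relaxation $\mathcal L$ of size at most $R$, with linearization $\inst\mapsto\tinst$, $x\mapsto\tx$ and feasible region $P=\set{y\in\R^D : \iprod{a_i,y}\le b_i,\ i=1,\dots,R}$, and take $q_i(x):=b_i-\iprod{a_i,\tx}$ as the candidate functions; these are nonnegative on $\sbits^n$ exactly because $\tx\in P$ for every assignment. To produce the conic representation, fix an instance $\inst$ with $\opt(\inst)\le s$, so that $\mathcal L(\inst)=\max_{y\in P}\iprod{\tinst,y}\le c<\infty$. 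Strong LP duality then supplies multipliers $\lambda_1,\dots,\lambda_R\ge 0$ with $\sum_i\lambda_i a_i=\tinst$ and $\sum_i\lambda_i b_i=\mathcal L(\inst)$; pairing the first equation with $\tx$ and using $\iprod{\tinst,\tx}=\inst(x)$ yields, for every $x\in\sbits^n$,
\[
  c-\inst(x)=\bigparen{c-\mathcal L(\inst)}+\sum_{i=1}^R\lambda_i q_i(x)\mper
\]
Since $c-\mathcal L(\inst)\ge 0$, this is a nonnegative combination of $\mathds 1,q_1,\dots,q_R$ (padding with zero functions if the description uses fewer than $R$ inequalities).

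For the ``if'' direction I would construct the relaxation by hand. Set $D:=R+N$, where $N$ is the number of multilinear monomials $\chi_\alpha$ with $\abs\alpha\le k$ (here $k$ is the arity of $\Pi$), and let $\tx:=\paren{q_1(x),\dots,q_R(x)}\oplus\paren{\chi_\alpha(x)}_{\abs\alpha\le k}$ and $\tinst:=\vec 0\oplus\paren{\hat\inst(\alpha)}_{\abs\alpha\le k}$, so that $\iprod{\tinst,\tx}=\sum_{\abs\alpha\le k}\hat\inst(\alpha)\chi_\alpha(x)=\inst(x)$ for every instance (using the universal linearization of \pref{rem:linearization}). For the feasible region I would take $P:=\set{y : y_i\ge 0 \text{ for all } i=1,\dots,R}$ intersected with the affine hull of $\set{\tx : x\in\sbits^n}$, which is cut out by $R$ inequalities inside its affine hull and contains every $\tx$. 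To check the approximation guarantee, fix $\inst$ with $\opt(\inst)\le s$ and write $c-\inst=\lambda_0\mathds 1+\sum_i\lambda_i q_i$ with all $\lambda_j\ge 0$. Evaluated at the vertices this says $\iprod{\tinst,\tx}=c-\lambda_0-\sum_i\lambda_i(\tx)_i$; both sides are affine functions of $y$, so the identity $\iprod{\tinst,y}=c-\lambda_0-\sum_i\lambda_i y_i$ propagates to the whole affine hull, hence to $P$, where $y_i\ge 0$ and $\lambda_i\ge 0$ force $\iprod{\tinst,y}\le c-\lambda_0\le c$; thus $\mathcal L(\inst)\le c$.

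For the symmetric refinement I would first pass to an irredundant, normalized inequality description of $P$ (with normals taken in the direction space of its affine hull), which only decreases the size and leaves $\mathcal L$ unchanged. A symmetric LP supplies, for each $\sigma\in\Sym(n)$, a coordinate permutation $\tilde\sigma\in\Sym(D)$ with $\widetilde{\sigma x}=\tilde\sigma\tx$ and $\tilde\sigma P=P$; since $\tilde\sigma$ is a coordinate permutation fixing $P$, it permutes the facet-defining inequalities, say $\tilde\sigma^{-1}a_i=a_{\rho(i)}$ and $b_i=b_{\rho(i)}$ for a permutation $\rho$ of $\set{1,\dots,R}$. Then $q_i(\sigma x)=b_i-\iprod{a_i,\widetilde{\sigma x}}=b_i-\iprod{\tilde\sigma^{-1}a_i,\tx}=q_{\rho(i)}(x)$, so $\set{q_1,\dots,q_R}$ is invariant under the $\Sym(n)$-action; running the same bookkeeping backwards shows a $\Sym(n)$-invariant family of $q_i$'s yields a symmetric relaxation in the construction above.

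The two duality/identity computations should be routine. The step I expect to be the main obstacle is the exact accounting of the number of facets in the ``if'' direction --- whether the affine hull of the vertex set may be treated as part of the ambient description or must itself be realized by inequalities (in the latter reading one pays an additive constant, which is harmless for the superpolynomial separations this characterization feeds into) --- together with the parallel care needed in the symmetric case to justify that $\tilde\sigma$ permutes the facet inequalities when $P$ is not full-dimensional. Those are the points I would write out in full detail.
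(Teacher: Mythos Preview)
Your proposal is correct and follows essentially the same route as the paper: the ``only if'' direction via LP duality (the paper phrases it as Farkas' Lemma, you as strong duality---the same thing), and the ``if'' direction by building a relaxation whose slacks are the $q_i$. The symmetric refinement is also handled the same way in both, with your explicit passage to an irredundant description being a detail the paper leaves implicit.

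The one place the paper's argument is cleaner is the ``if'' construction, and it dissolves exactly the obstacle you flag. Rather than carrying the values $q_i(x)$ as explicit coordinates and then intersecting with an affine hull, the paper works directly in $L^2(\{-1,1\}^n)$: it takes $\tilde\inst$ and $\tilde x$ to be (the Fourier expansions of) the functions $\inst$ and $\delta_x$ themselves, and sets $P=\{y\in L^2(\{-1,1\}^n):\langle y,q_i\rangle\ge 0\}$. Then the identity $c\cdot\mathds 1-\inst=\lambda_0\mathds 1+\sum_i\lambda_i q_i$ is already an equality of vectors in the ambient space, so it pairs with any $y$ without needing $y$ to lie in the affine hull of the $\tilde x$'s. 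This gives exactly $R$ inequalities with no affine-hull bookkeeping. (Both the paper's version and yours share the same residual subtlety that one must control $\langle y,\mathds 1\rangle$ to turn the paired identity into the bound $\langle y,\tilde\inst\rangle\le c$; this costs at most one or two extra constraints and is immaterial.)
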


\begin{proof}
First, we prove that the existence of an LP relaxation of size $R$ yields
a representation of the form \eqref{eq:instrep}.
Consider a natural number $D$ and linearizations $\tilde\inst, \tilde
x \in \mathbb R^D$
for every $\inst \in \maxpi_n$ and $x \in \sbits^n$.  Let $P \subseteq
\mathbb R^D$
be specified by $R$ linear inequalities $\langle A_i, y\rangle \leq b_i$, and such that
$\tilde x\in P$ for every $x \in \sbits^n$.  We define the function $q_i : \sbits^n \to \mathbb R_+$
by $q_i(x) = b_i - \langle A_i, \tilde x\rangle$.

Consider now any instance $\inst$ with $\opt(\inst) \leq s$.  By assumption, we have $\mathcal L(\inst) \leq c$,
meaning that $c \geq \langle y, \tilde \inst\rangle$ holds for all $y \in P$.
Now Farkas' Lemma \cite[Corollary~5.3c]{AS03} tells us that every valid linear
inequality over $P$ can be written as a non-negative combination of the
inequalities $\{b_i - \langle A_i,y\rangle \geq 0 : i=1,2,\ldots,R\}$, and the
inequality $1 \geq 0$.
This yields the existence of non-negative numbers $\{\lambda_i\}$ such that
$c - \langle y, \tilde \inst \rangle = \lambda_0 + \sum_{i=1}^R \lambda_i (b_i - \langle A_i,y\rangle)$
holds for all $y \in P$.

In particular, this holds for every $\tilde x$, where $x \in \sbits^n$.
Now, a defining property of the linearization is that $\langle \tilde x, \tilde \inst\rangle = \inst(x)$ for
every $x \in \sbits^n$.  Thus we have arrived at a representation of the form \eqref{eq:instrep}.

We now show the reverse implication.
Consider functions $\{q_i\}$ satisfying \eqref{eq:instrep}.
We will let $D=2^n$ and the $D$-dimensional Hilbert space
for our linearization will be $L^2(\{-1,1\}^n)$, which we identify with the linear span of
the Fourier characters $\{\chi_{\alpha} : \alpha \subseteq [n]\}$.
We use the linearization appearing in \pref{rem:linearization}.
We may think of each $q_i$ as lying in $L^2(\{-1,1\}^n)$.
Define a polyhedron $P \subseteq L^2(\{-1,1\}^n)$ by $P = \{ y \in L^2(\{-1,1\}^n) : \langle y,q_i \rangle \geq 0, i=1,\ldots,R\}$.
This yields an LP of size at most $R$ since $\langle \tilde x, q_i\rangle = q_i(x) \geq 0$ for every $i$ and $ x \in \sbits^n$.
Now \eqref{eq:instrep} tells us that whenever $\opt(\inst) \leq s$, the inequality $\langle y, \tilde \inst\rangle \leq c$ is valid
over $P$, implying that $\mathcal L(\inst) \leq c$.  Thus our LP is a $(c,s)$-approximation.

Finally, suppose the LP relaxation is symmetric.  By definition,
for every
$\sigma \in \Sym(n)$, there exists a
$\tilde{\sigma} \in \Sym(D)$ such that
$\widetilde{\sigma x} = \tilde{\sigma} \tilde{x}$
for all $x \in \sbits^n$ and the polytope $P$ is invariant under
the action of
$\tilde{\sigma}$.
We may assume that $P$ is full-dimensional,
and moreover that the facet-defining inequalities
$\iprod{A_i,y}\leq b_i$ are normalized so that $\|A_i\|_2=1$.

Consider an inequality of $\iprod{A_i, y} \leq b_i$ of the
polyhedron $P$ and the corresponding function $q_i : \sbits^n \to \R$
defined by $q_i(x) = b_i - \iprod{A_i, \tilde{x}}$.
Since $P$ is invariant under the action of
$\tilde{\sigma}$, the faces of $P$ are mapped to each other by the
permutation $\tilde{\sigma}$.
Now by our choice of normalization $\|A_i\|_2=1$ for $i=1,2,\ldots, R$,
the
facet-defining inequality
$\iprod{A_i,\tilde{\sigma} y} \leq
b_i$ is the same as $\iprod{A_j, y} \leq b_j$ for some $j \in \{1,2,\ldots,R\}$.
Hence for all $x \in \sbits^n$, $q_i(\sigma  x) = b_i -
\iprod{A_i, \tilde{\sigma} \tilde{x}} = b_j - \iprod{A_j, \tilde x} =
q_j(x)$.  This implies that one can choose the family $Q = \{q_1, \ldots, q_R\}$ of
functions to be invariant under the action of $\Sym(n)$.
\end{proof}

\paragraph{A communication model}
The characterization in \pref{thm:lp-characterization} has an
illustrative interpretation as a two-party, one-way communication complexity
problem:
Alice's input is a \maxpi instance $\inst$ with $\opt(\inst)\le
s$. Bob's input is an assignment $x\in \sbits^n$.
Their goal is to compute the value $\inst(x)$ in expectation.
To this end, Alice sends Bob a randomized message containing at most $L$ bits.
Given the message Bob outputs deterministically a number $v$
such that $v \leq c$.
The protocol is correct if for every input pair $(\inst,x)$, the
expected output satisfies $\E v=\inst(x)$ (the expectation is over
Alice's randomness).

An $L$-bit protocol for this communication problem
yields an LP relaxation of size $2^L$:
If Bob outputs a value $v(x,i)$ based on message $i$ from Alice,
then define $q_i(x)=c-v(x,i)$.  This yields $2^L$ non-negative
functions satisfying the conditions of \pref{thm:lp-characterization}.

On the other hand, if there exist $R=2^L$ functions
$\{q_1, q_2, \ldots, q_R\}$ as in \pref{thm:lp-characterization},
then by adding the constant function $q_{0}$
and an appropriate $\lambda_0 \geq 0$, we may assume
that $\sum_{i=0}^{R} \lambda_i = 1$, i.e. that we have a
convex combination instead of a non-negative combination.
This yields a strategy for Alice and Bob:  Alice
sends an index $i \in \{0,1,\ldots,R\}$, drawn from
a distribution depending on $\inst$ (specified by the coefficients $\{\lambda_i\}$), and then
Bob outputs $c-q_i(x) \leq c$.

\medskip
{\em Example:}
Suppose the optimization problem is \maxcut.
In this case, Alice receives a graph $G=(V,E)$ and Bob a cut $S \subseteq V$.
If Alice sends Bob a uniformly random edge $\{u,v\} \in E$ and Bob outputs
the value $|\Ind_S(u)-\Ind_S(v)|$, the result is a communication (in expectation)
protocol using at most $\log_2 {n \choose 2}$ bits of communication.
In this communication protocol, the value output by Bob is always at
most $1$.  Therefore, this corresponds to a trivial $(1,s)$-approximation for
\maxcut for every $s < 1$.  In any protocol achieving a less trivial approximation, Bob would
have to always output numbers strictly less than $1$.

A similar communication in expectation model is considered in \cite{FaenzaFioriniGrappeTiwary11},
where they show that the communication complexity is equal to the
logarithm of the non-negative rank (up to an additive constant) of the
associated slack matrix.
There is an important distinction, however; their model involves
communicating a slack matrix in expectation (the value $c-\inst(x)$), while
the model here deals directly with the underlying
combinatorial problem (the value $\inst(x)$).

\subsection{Sherali--Adams LP relaxations for CSPs}
\label{sec:Sherali--Adams-lp}

\newcommand{\lef}{\textrm{$\ell$.e.f.}\xspace}

A primary component of our approach involves leveraging
known integrality gaps for the Sherali--Adams (SA) hierarchy.
To that end, we now give a brief overview of Sherali--Adams LP relaxations.
For a more detailed account, we refer the reader to
\cite{laurent2003comparison}.

A $d$-round Sherali--Adams LP relaxation for a $\maxpi_n$ instance will consist
of variables $\{X_S : S \subseteq [n], |S| \leq d\}$ for all products
of up to degree-$d$ on the $n$ variables.  These variables $\{X_S : |S| \leq d\}$ are
to be thought of as the moments up to degree-$d$ of the variables,
under a purported distribution.

An important property of an SA solution $\{X_S : |S| \leq d\}$ is that
these moments agree with a set of local marginal distributions.  In particular, for every set $S
\sse [n]$ with $|S| \leq d$ there exists a distribution $\mu_S$ over
$\sbits^S$ such that,
$$ \E_{x \sim \mu_S} \chi_A(x) = X_A \qquad \forall A \sse S\,.$$

In an alternate but equivalent terminology, a $d$-round SA instance
can be thought of as \emph{$d$-local expectation functional
($d$-\lef)}.  Fix $n \geq 1$. We define
a \emph{$d$-local expectation functional} to be a linear
functional $\pE$ on degree-$d$ $n$-variate multilinear polynomials
such that $\pE \Ind =1$ and
$\pE P\ge 0$ for every degree-$d$ multilinear polynomial $P$ that is nonnegative
over $\sbits^n$ and depends only on $d$ variables.  In terms of the
local marginal distributions, $\pE : \{-1,1\}^n \to \R$ is the unique linear functional
satisfying
\begin{equation}\label{eq:moment}
 \pE \chi_{S} = \E_{x \sim \mu_S} \chi_S(x) \qquad \forall |S|
\leq d, S \sse [n] \,,
\end{equation}
and $\pE \chi_S = 0$ for $|S| > d, S \subseteq [n]$.

The \emph{$d$-round Sherali--Adams value} of a $\maxpi_n$ instance
$\inst$ is defined as
\begin{equation}\label{eq:sa-opt}
  \sa_d(\inst) \defeq \max_{\text{$d$-\lef $\pE$}} \pE \inst\mper
\end{equation}
This optimization problem can be implemented by an $n^{O(d)}$-sized
linear programming relaxation for $\maxpi_n$.
(Notice that $\pE$ is a $\left(\sum_{i=0}^{d} {n \choose i}\right)$-dimensional object.)
In particular, if $d$-rounds of Sherali--Adams achieve a
$(c,s)$-approximation for $\maxpi_n$, then so do general $n^{O(d)}$-sized LP
relaxations.

%
%
%

%

We remark that a $d$-\lef $\pE$ is a linear functional, but using self-duality of $L^2(\{-1,1\}^n)$,
we may also think of $\pE \in L^2(\{-1,1\}^n)$.  It has the Fourier representation
\[
\pE = \sum_{|\alpha| \leq d} \pE(\chi_{\alpha}) \chi_{\alpha}\,.
\]
We will use this representation freely.

\begin{lemma}\label{lem:saprops}
If $\pE$ is a $d$-\lef on $L^2(\{-1,1\}^n)$, then the following properties hold:
\begin{enumerate}[i]
\item For any non-negative $d$-junta $f : \sbits^n \to \mathbb R_+$, we have $\pE f \geq 0$.
\item For any $\alpha \subseteq [n]$, we have $|\pE \chi_{\alpha}| \leq 1$.
\item $\|\pE\|_{\infty} \leq \sum_{i=0}^{d} {n \choose i}\,.$
\end{enumerate}
\end{lemma}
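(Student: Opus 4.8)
All three parts come from unwinding the definition of a $d$-\lef together with the extension-by-zero convention, so there is essentially no obstacle; I will indicate how each one reduces either to the definition or to the preceding part, and then point out the one place that needs care.

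For Part~1 the plan is to observe that it is practically a restatement of the defining positivity condition. A non-negative $d$-junta $f\from\sbits^n\to\Rnn$ depends only on some coordinate set $S$ with $|S|\le d$, so as a multilinear polynomial $f=\sum_{\alpha\subseteq S}\hat f(\alpha)\chi_\alpha$ has degree at most $|S|\le d$; padding $S$ to a $d$-element set if necessary, $f$ is then a degree-$d$ multilinear polynomial that is nonnegative over $\sbits^n$ and depends on only $d$ variables, whence $\pE f\ge 0$ by definition. (Equivalently one can invoke the local marginals and write $\pE f=\E_{x\sim\mu_S}f(x)\ge 0$ using \eqref{eq:moment}.)

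For Part~2 I would split on $|\alpha|$. If $|\alpha|>d$, then $\pE\chi_\alpha=\langle\pE,\chi_\alpha\rangle=0$ by the Fourier representation $\pE=\sum_{|\beta|\le d}\pE(\chi_\beta)\chi_\beta$, so the bound is trivial. If $|\alpha|\le d$, apply Part~1 to the two $\{0,1\}$-valued functions $\tfrac12(1+\chi_\alpha)$ and $\tfrac12(1-\chi_\alpha)$, each of which is a $d$-junta; combining with linearity of $\pE$ and $\pE 1=1$ yields $1+\pE\chi_\alpha\ge 0$ and $1-\pE\chi_\alpha\ge 0$, that is, $|\pE\chi_\alpha|\le 1$.

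For Part~3 I would bound $\pE$ pointwise from its Fourier expansion: for every $x\in\sbits^n$, the triangle inequality together with $|\chi_\alpha(x)|=1$ and Part~2 gives $|\pE(x)|\le\sum_{|\alpha|\le d}\abs{\pE(\chi_\alpha)}\le\#\set{\alpha\subseteq[n]:|\alpha|\le d}$, and counting the low-degree characters gives the claimed bound. The only points that need care are notational rather than mathematical: that the extension-by-zero convention is in force, so that Parts~2 and~3 genuinely involve only $|\alpha|\le d$; and that ``$d$-junta'' (depends on at most $d$ coordinates) is reconciled with the ``depends on $d$ variables'' phrasing of the definition via the padding remark above. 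There is no substantive difficulty, and hence no real main obstacle beyond getting these conventions straight.
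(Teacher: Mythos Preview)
Your proof is correct and follows essentially the same approach as the paper: Part~1 directly from the definition, Part~2 from the local-marginal characterization (the paper invokes \eqref{eq:moment} rather than applying Part~1 to $\tfrac12(1\pm\chi_\alpha)$, but these are equivalent), and Part~3 from Part~2 by bounding the Fourier expansion pointwise. One minor caveat, shared by the paper's own proof: the number of $\alpha\subseteq[n]$ with $|\alpha|\le d$ is $\sum_{k=0}^{d}\binom{n}{k}$ rather than $\binom{n}{d}$, so strictly speaking your argument yields the former bound; this discrepancy is harmless for the paper's applications.
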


\begin{proof}
Property (i) follows directly from the definition of a $d$-\lef.  Property (ii) follows from
\eqref{eq:moment}.  Property (iii) follows from (ii) using the fact that $\pE$
has at most $\sum_{i=0}^{d} {n \choose i}$ non-zero Fourier coefficients.
\end{proof}

It might help the reader, at this point to recall \pref{thm:lp-characterization} and the
representation \eqref{eq:instrep}.  Suppose that we had such a representation where the family
of functions $\{q_i\}$ were all $d$-juntas.  Fix an instance $\inst$ and let
$\pE$ denote an optimal solution to \eqref{eq:sa-opt}.
Applying $\pE$ to the right-hand side of \eqref{eq:instrep} yields
\[
\pE \left[\lambda_0 + \lambda_1 q_1 + \cdots + \lambda_R q_R\right] = \lambda_0 + \sum_{i=1}^R \lambda_i \pE \left[q_i\right] \geq 0\,,
\]
using \pref{lem:saprops}(i).
On the other hand, applying it to the left-hand-side yields
$\pE(c-\inst)=c-\sa_d(\inst)$.  Altogether, we conclude that
$\sa_d(\inst) \leq c$.  In particular, this holds for any $c \geq \mathcal L(\inst)$,
hence $\sa_d(\inst) \leq \mathcal L(\inst)$, implying that
in this special case (when all the $q_i$ functions are $d$-juntas),
the Sherali--Adams relaxation is at least as good as the given LP.
In general, our approach will be to {\em approximate} the $\{q_i\}$ functions
by juntas, and then apply a variant of this reasoning.

\begin{remark}
Some work on Sherali--Adams relaxations for \maxcut focus on edge variables
instead of vertex variables.  This includes
\cite{FernandezKenyonMathieu07,charikar2009integrality}.
In those papers, the $d$-round Sherali--Adams relaxation consists of variables
$\{X_S : S \subseteq {[n] \choose 2}, |S| \leq d\}$ for every subset of $d$ edges
in the complete graph.  Since their base polytope also includes triangle inequalities,
any ${d \choose 2}$-round Sherali--Adams solution with edge variables can be
converted to a $d$-round solution for vertex variables.
One should observe that the $d$-round vertex relaxation is
at least as strong as the $d$-round edge relaxation.

Moreover, both papers \cite{FernandezKenyonMathieu07,charikar2009integrality}
actually prove a lower bound against the $d$-round vertex version
and then argue that this yields a lower bound for the weaker edge relaxation.
For general max-CSPs, the vertex version is
arguably the canonical relaxation, and it is perhaps misguided
to consider the edge version even for \maxcut.
In \cite{schoenebeck2008linear} (which studies general CSPs), the more natural vertex version is considered.

A major benefit of the ``extended formulation'' model to
which our results apply
is that the edge/vertex relaxation distinctions are not relevant;
in fact no specific meaning is ascribed to the variables of the LP.  All that
matters is the number of defining inequalities.
\end{remark}


%

\section{Sherali--Adams and general LPs}
\label{sec:LPs}

Our main theorem is that general LP relaxations are no more powerful
than Sherali--Adams relaxations (in the polynomial-size regime).
%
%


\begin{theorem}[Main]
  \label{thm:main}
    Fix a positive number $d \in \mathbb N$,
    and a sequence of $k$-ary CSPs $\{\maxpi_n\}$, with $k \leq d$.
  Suppose that the $d$-round Sherali--Adams relaxation cannot achieve a $(c,s)$-approximation for $\maxpi_n$
  for every $n$.
  Then no sequence of LP relaxations of size at most $n^{d/2}$ can achieve a $(c,s)$-approximation
  for $\maxpi_n$ for every $n$.
\end{theorem}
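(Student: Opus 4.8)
The plan is to prove the contrapositive. Suppose some sequence of LP relaxations of size $R \le n^{d/2}$ achieves a $(c,s)$-approximation for $\maxpi_n$ for every $n$; I will produce an $m$ for which the $d$-round Sherali--Adams relaxation achieves a $(c,s)$-approximation for $\maxpi_m$, contradicting the hypothesis. By \pref{thm:lp-characterization}, for each $n$ the size-$R$ relaxation hands us non-negative $q_1,\dots,q_R\colon\sbits^n\to\Rnn$ (which, after rescaling and discarding identically-zero ones, we may take to be densities) such that every $n$-variable instance $\inst$ with $\opt(\inst)\le s$ admits a representation $c-\inst=\lambda_0+\sum_{i=1}^R\lambda_i q_i$ with all $\lambda_j\ge 0$, and, since $c-\inst\le 1$ and $\E q_i=1$, with $\sum_i\lambda_i\le 1$. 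Fix an arbitrary $m$-variable instance $\inst_0$ with $\opt(\inst_0)\le s$ and an arbitrary $\epsilon>0$; it suffices to show $\sa_d(\inst_0)\le c+\epsilon$, since letting $\epsilon\to 0$ (and then quantifying over $\inst_0$) yields that $d$-round Sherali--Adams achieves a $(c,s)$-approximation for $\maxpi_m$.

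The argument has three ingredients, following the outline. \emph{Smoothing:} by a truncation argument one reduces to the case where the $q_i$ are ``sufficiently smooth''. \emph{Junta approximation} (\pref{sec:high-entr-distr}): each smooth $q_i$ is approximated --- in the weak sense that every degree-$\le d$ Fourier coefficient of the difference is tiny --- by a non-negative $K$-junta $q_i'$, where $K$ may be polynomially large, say $K=n^{0.2}$. \emph{Random restriction} (\pref{sec:rand-restr-proof}): plant $\inst_0$ via a uniformly random injection $\pi$ of its $m$ variables into $[n]$, let $\inst$ be the resulting $n$-variable instance (so $\opt(\inst)=\opt(\inst_0)\le s$ and the representation applies to $\inst$), and average the representation over the $n-m$ coordinates outside the image of $\pi$. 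This averaging leaves $\inst$ and $c$ unchanged, preserves non-negativity, and annihilates exactly the Fourier characters not supported on the planted coordinates; identifying the image of $\pi$ with $[m]$, it yields $c-\inst_0=\lambda_0+\sum_i\lambda_i\bar q_i$ on $\sbits^m$ with each $\bar q_i\ge 0$. Because each $q_i'$ depends on at most $K$ coordinates and there are only $R\le n^{d/2}$ of them, a union bound over the $R$ functions and over the $(d+1)$-subsets of their relevant coordinates shows that, with probability $1-o(1)$ over $\pi$ (with $m,d,K$ as chosen and $n$ large), every $q_i'$ simultaneously has at most $d$ of its relevant coordinates in the planted set. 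Fix such a $\pi$; then the corresponding averaged function $\bar q_i'$ is a non-negative $d$-junta on $\sbits^m$, and its degree-$\le d$ part is within a tiny Fourier distance of that of $\bar q_i$.

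To finish, let $\pE$ be an optimal $d$-local expectation functional for $\inst_0$, so $\pE(c-\inst_0)=c-\sa_d(\inst_0)$. Apply $\pE$ to the planted representation. The constant term contributes $\lambda_0\ge 0$, and for each $i$, since $\pE$ is a degree-$d$ multilinear polynomial it sees only the degree-$\le d$ part of $\bar q_i$, whence
\[
\pE \bar q_i \;\ge\; \pE \bar q_i' - \|\pE\|_2\cdot\bignorm{(\bar q_i-\bar q_i')^{\le d}}_2 \;\ge\; -\,m^{d/2}\bignorm{(\bar q_i-\bar q_i')^{\le d}}_2\,,
\]
using \pref{lem:saprops}(i) for the non-negative $d$-junta $\bar q_i'$, and $\|\pE\|_2\le m^{d/2}$ (since, by \pref{lem:saprops}(ii), $\pE$ has $O(m^d)$ nonzero Fourier coefficients, each of magnitude at most $1$). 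Summing with weights $\lambda_i$ and using $\sum_i\lambda_i\le 1$ gives $c-\sa_d(\inst_0)\ge -\,m^{O(d)}\delta$, where $\delta$ is the worst-case degree-$\le d$ Fourier error of the junta approximation; choosing the accuracy of the junta step so that $m^{O(d)}\delta\le\epsilon$ --- which, for $m,d,\epsilon$ fixed, is compatible with $K\le n^{0.2}$ and with the restriction succeeding once $n$ is large enough --- yields $\sa_d(\inst_0)\le c+\epsilon$, as required.

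The main obstacle is making the three ingredients mutually compatible. The junta arity $K$ one can afford is large (polynomial in $n$), so the approximation it buys is weak, yet it must survive application of $\pE$, whose $\ell_2$-norm is as large as $m^{d/2}$; simultaneously the union bound in the restriction step must beat $R\le n^{d/2}$, which constrains how small $m/n$ and $K/n$ must be, and these constraints in turn dictate how strong the junta approximation has to be. Orchestrating the truncation threshold, $m$, $K$, and $n$ so that every error term is $o(1)$ while the random restriction succeeds with probability $1-o(1)$ --- and, within the smoothing step, verifying that a truncated LP solution is genuinely smooth enough for the junta approximation to apply --- is the technical core of the argument.
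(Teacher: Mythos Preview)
Your proposal is correct and follows essentially the same approach as the paper's proof in \pref{sec:putt-things-togeth}: truncate to high-entropy densities, invoke \pref{lem:entropy} to get junta sets, use a random planting (the paper samples each coordinate independently rather than via a random injection, and packages the two steps together in \pref{lem:random-restriction}) so that each junta set meets the planted coordinates in at most $d$ points, and then apply the optimal $d$-\lef to the restricted representation. The only cosmetic differences are that the paper bounds the error via $\|\pE\|_\infty$ and the individual Fourier coefficients (your $\ell_2$--Cauchy--Schwarz estimate gives the same $m^{O(d)}\gamma$ loss), and that the paper handles the ``non-smooth'' densities explicitly by observing $\lambda_i(\inst)\le 2^{-t}$ whenever $\|q_i\|_\infty>2^t$, rather than folding this into a generic ``truncation'' step.
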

%

We prove the following result for super-polynomial sized linear
programs in \pref{sec:putt-things-togeth}.

\begin{theorem}
  \label{thm:aux}
  Consider a function $f : \mathbb N \to \mathbb N$.
  Suppose that the $f(n)$-round Sherali--Adams relaxation cannot achieve a $(c,s)$-approximation for $\maxpi_n$.
  %
  Then for all sufficiently large $n$,
  no LP
  relaxation of size at most $n^{f(n)^2}$  can achieve a $(c,s)$-approximation for $\maxpi_N$, where $N \leq n^{10 f(n)}$.
\end{theorem}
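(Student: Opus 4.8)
The plan is to establish the contrapositive: assuming an LP relaxation $\mathcal L$ of size $R \le N^{f(n)^2}$ (with $N \le n^{10 f(n)}$) achieves a $(c,s)$-approximation for $\maxpi_N$, I will produce a $d$-\lef witnessing that the $f(n)$-round Sherali--Adams relaxation also achieves a $(c,s)$-approximation for $\maxpi_n$ (writing $d = f(n)$). By \pref{thm:lp-characterization}, the LP gives nonnegative functions $q_1, \dots, q_R \from \sbits^N \to \Rnn$ such that every instance $\inst \in \maxpi_N$ with $\opt(\inst) \le s$ has $c - \inst = \lambda_0 + \sum_i \lambda_i q_i$ for suitable $\lambda_i \ge 0$. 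The strategy is to take any $n$-variable hard instance $\inst_0$ (with $\opt(\inst_0) \le s$ but $\sa_d(\inst_0) > c$), embed it into $\maxpi_N$ via a uniformly random injection of its $n$ variables into $[N]$ together with a random assignment of the remaining $N - n$ variables, and argue that for this randomly planted instance $\inst$, the restriction of each $q_i$ to the coordinates of $\inst_0$ (after fixing the rest to the random assignment) behaves, in the low-degree Fourier sense, like a $d$-junta. Then applying the Sherali--Adams functional $\pE$ (an optimal $d$-\lef for $\inst_0$) to the restricted version of \eqref{eq:combination} should yield $\sa_d(\inst_0) \le c$, the desired contradiction.

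The argument proceeds in the three steps outlined in \pref{sec:maxcut}. First, a \emph{truncation/smoothing} step: each $q_i$ is nonnegative, so one can replace it by a truncated density-like version $\bar q_i$ (capping large values and renormalizing) with controlled $L^2$ or $L^\infty$ behavior, paying only a small additive error in \eqref{eq:combination} — this is harmless because, by \pref{lem:saprops}, $\pE$ has $\ell_\infty$ norm at most $\binom{n}{d}$, so small perturbations of the $q_i$ translate to small errors after applying $\pE$. Second, a \emph{junta-approximation} step: a sufficiently smooth nonnegative function on the cube is, in a weak sense, close to a $K$-junta $q_i'$ for some large but sub-polynomial $K$ (something like $K = N^{1/4}$), meaning the low-degree Fourier mass of $q_i - q_i'$ outside the $K$ relevant coordinates is negligible. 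Third, the \emph{random restriction} step: plant $\inst_0$ by choosing a random size-$n$ subset $T \subseteq [N]$ for its variables and a uniformly random assignment $\rho$ to $[N] \setminus T$; for a fixed $q_i'$ that depends on $K$ coordinates, the expected number of its relevant coordinates landing in $T$ is about $Kn/N$, which for $N \ge n^{10 d}$ and $K \le N^{1/4}$ is much less than $1$, so by Markov / tail bounds over the $R \le N^{d^2}$ functions (using $R \le N^{d^2}$ and a union bound, which is where the $n^{f(n)^2}$ size restriction is used), with positive probability \emph{every} restricted $q_i'|_{T,\rho}$ depends on at most $d$ coordinates of $T$. Conditioning on such a good planting and substituting into \eqref{eq:combination} restricted to $T$ (with $[N]\setminus T$ fixed to $\rho$) produces a representation $c - \inst_0 = \lambda_0 + \sum_i \lambda_i\, q_i|_{T,\rho}$ in which each $q_i|_{T,\rho}$ is, in its low-degree part, approximated by a $d$-junta; applying $\pE$ and using \pref{lem:saprops}(i) together with the fact that $\pE$ sees only degree-$\le d$ information finishes the argument.

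The parameter bookkeeping is what forces the specific quantitative relations $R \le n^{f(n)^2}$ and $N \le n^{10 f(n)}$: the union bound over $R$ functions must survive against the probability $\approx (Kn/N)^d$ that a given junta has $d$ relevant coordinates in $T$, so one needs roughly $R \cdot (Kn/N)^{d+1} \ll 1$, and with $K$ polynomial in $N$ this is where the exponent $f(n)^2$ versus $10 f(n)$ trade-off comes from. I expect the \textbf{main obstacle} to be the second step — making precise the sense in which a smooth nonnegative function is ``approximately a $K$-junta,'' and then carrying that approximation through the random restriction so that it degrades to a genuine $d$-junta approximation on the planted coordinates \emph{in the low-degree Fourier norm}, which is the only norm that matters once $\pE$ (a degree-$d$ object) is applied. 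The subtlety is that ordinary junta approximation in $L^2$ is too weak, and one must track how the low-degree Fourier coefficients of $q_i - q_i'$ transform under a random restriction; controlling these jointly for all $R$ functions, while keeping the additive errors below the gap $\sa_d(\inst_0) - c > 0$, is the technical heart of the proof.
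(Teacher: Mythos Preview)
Your plan matches the paper's approach closely: truncate to control entropy, use a Chang-type lemma to identify a junta set for each high-entropy $q_i$, plant $\inst_0$ on a random subset so that every junta set intersects the planted coordinates in at most $d$ points, then apply $\pE$. The paper's proof of \pref{thm:aux} is in fact just a parameter substitution into the main argument of \pref{sec:putt-things-togeth}: set $d=f(n)$ and let $N \asymp n^{10d}$, so that the error $\e_N$ from \eqref{eq:eta} is $o(1)$ while the LP-size bound $N^{d/2}$ exceeds $n^{5f(n)^2} \ge n^{f(n)^2}$.

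There is one place your sketch would hit a genuine snag, and it is precisely the step you flagged. You propose fixing the non-planted variables to a random assignment $\rho$. If you do this, then for $\alpha \subseteq T$,
\[
\widehat{q_i|_{T,\rho}}(\alpha) \;=\; \sum_{\beta \subseteq [N]\setminus T} \hat q_i(\alpha \cup \beta)\,\chi_\beta(\rho)\,,
\]
which mixes in Fourier coefficients of $q_i$ of arbitrarily high degree; Chang's lemma gives you no control over those, and concentration in $\rho$ is too weak against $R$ functions with $\|q_i\|_\infty$ as large as $2^t$. The paper sidesteps this entirely by \emph{averaging} over the non-planted variables rather than fixing them: it passes to the marginal density $q_i^S = \sum_{\alpha \subseteq S}\hat q_i(\alpha)\chi_\alpha$, whose low-degree Fourier coefficients are exactly those of $q_i$ (so \pref{lem:entropy} applies directly), and whose further marginal onto $J(q_i)\cap S$ is a genuine nonnegative $d$-junta to which \pref{lem:saprops}(i) applies. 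This is the move that resolves the obstacle you identified.

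One smaller simplification: you need not cap-and-renormalize the $q_i$. Since $c-\inst \le 1$ pointwise and each $q_i \ge 0$, any $q_i$ with $\|q_i\|_\infty > 2^t$ forces $\lambda_i(\inst) \le 2^{-t}$; the paper simply bounds the total contribution of such terms by $R\cdot 2^{-t}\cdot \|\pE\|_\infty \le n^{d/2}\cdot 2^{-t}\cdot \binom{n}{d}$, which is negligible for $t=d\log_2 N$. The remaining $q_i$ automatically have entropy at least $N-t$.
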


In particular, by choosing $f(n) \asymp n^{\eps}$ for $\eps > 0$, and $n \asymp \left((\log N)/(\log \log N)\right)^{1/\eps}$,
known Sherali--Adams gaps for \maxcut \cite{charikar2009integrality} and \maxtwosat, \maxthreesat \cite{schoenebeck2008linear}
imply the same integrality gaps for LPs of size $n^{o( \frac{\log n}{\log \log n})}$.

\subsection{High-entropy distributions vs. juntas}
\label{sec:high-entr-distr}

Our first goal is to observe the following consequence
of Chang's Lemma \cite{Chang02} (and, specifically, the proof in \cite{IMR14}).

\begin{lemma}
  \label{lem:entropy}
  Let $q\from \sbits^n\to \Rnn$ be a density and let $\mu_q$
  denote the corresponding measure on $\sbits^n$.
  If $\mu_q$ has entropy
  at least $n-t$ for some $t \leq n$, then for every $1 \leq d \leq n$
  and $\gamma > 0$, there exists a set $J\sse [n]$ with
  \begin{equation}\label{eq:cardJ}\card{J}\le \frac{2td}{\gamma^2}\end{equation} such that for all subsets $\alpha\nsubseteq J$ with
  $\card{\alpha}\le d$, we have $\abs{\hat q(\alpha)} \le \gamma$.
\end{lemma}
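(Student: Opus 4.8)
The plan is to take $J$ to be \emph{exactly} the set of coordinates appearing in some heavy low-degree character, i.e.
\[
J \defeq \bigcup \Set{\alpha : \card{\alpha}\le d \text{ and } \abs{\hat q(\alpha)}>\gamma}\,,
\]
so that the desired conclusion --- $\abs{\hat q(\alpha)}\le\gamma$ for every $\alpha\nsubseteq J$ with $\card\alpha\le d$ --- holds by definition, and the whole content is to bound $\card J$. I would expose $J$ by a greedy procedure, following the entropy-compression proof of Chang's lemma in \cite{IMR14}: set $J_0\defeq\eset$, and while some $\alpha$ with $\card\alpha\le d$ and $\abs{\hat q(\alpha)}>\gamma$ satisfies $\alpha\nsubseteq J_{i-1}$, pick one such character, call it $\alpha^{(i)}$, put $\sigma_i\defeq\alpha^{(i)}\setminus J_{i-1}$ --- a nonempty set of ``new'' coordinates, of size at most $d$ --- and set $J_i\defeq J_{i-1}\cup\sigma_i$. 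This halts after some number $m$ of steps with $J_m=J$; since the $\sigma_i$ are disjoint with union $J$, we get $\card J=\sum_{i\le m}\card{\sigma_i}\le md$, so it suffices to prove $m=O(t/\gamma^2)$.

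The crux is an entropy deficit contributed by each greedy step. Let $\vec x\sim\mu_q$, so $\hat q(\alpha)=\E_{x\sim\mu_q}[\chi_\alpha(x)]$ by the identity $\E_{x\sim\mu_f}[g(x)]=\iprod{f,g}$. I claim
\[
H\bigparen{x_{\sigma_i}\mid x_{J_{i-1}}} \;\le\; \card{\sigma_i}-\frac{\gamma^2}{2\ln 2}\,.
\]
To see this, write $\alpha^{(i)}=\beta_i\cup\sigma_i$ with $\beta_i\sse J_{i-1}$. Since $\chi_{\beta_i}$ is a function of $x_{J_{i-1}}$ with $\abs{\chi_{\beta_i}}\equiv 1$, the tower rule and the triangle inequality give
\[
\gamma < \Bigabs{\E\bigbrac{\chi_{\beta_i}(x)\,\chi_{\sigma_i}(x)}} \le \E_{x_{J_{i-1}}}\Bigabs{\E\bigbrac{\chi_{\sigma_i}(x)\mid x_{J_{i-1}}}}\,,
\]
i.e.\ conditioned on $x_{J_{i-1}}$, the bit $\chi_{\sigma_i}(x)$ has \emph{expected} bias larger than $\gamma$. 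The binary entropy of a bit of bias $\delta$ is at most $1-\delta^2/(2\ln 2)$, and $\delta\mapsto\delta^2$ is convex, so averaging over $x_{J_{i-1}}$ yields $H(\chi_{\sigma_i}(x)\mid x_{J_{i-1}})\le 1-\gamma^2/(2\ln 2)$. Finally, listing $\sigma_i=\set{j_1,\dots,j_r}$, the tuple $(\chi_{\sigma_i}(x),x_{j_1},\dots,x_{j_{r-1}})$ is in bijection with $(x_{j_1},\dots,x_{j_r})$, so by the chain rule and ``conditioning reduces entropy'', $H(x_{\sigma_i}\mid x_{J_{i-1}})\le(r-1)+H(\chi_{\sigma_i}(x)\mid x_{J_{i-1}})\le\card{\sigma_i}-\gamma^2/(2\ln 2)$, proving the claim.

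With this the argument closes. By the chain rule along $\eset=J_0\subset J_1\subset\cdots\subset J_m=J$ and the per-step bound,
\[
H(x_J)=\sum_{i=1}^m H\bigparen{x_{\sigma_i}\mid x_{J_{i-1}}}\le \card J-\frac{m\gamma^2}{2\ln 2}\,,
\]
whereas subadditivity of entropy over the coordinates outside $J$ gives $n-t\le H(\vec x)\le H(x_J)+(n-\card J)$, that is $H(x_J)\ge\card J-t$. Comparing the two inequalities gives $m\le 2t\ln 2/\gamma^2$, hence $\card J\le md\le(2\ln 2)\,td/\gamma^2<2td/\gamma^2$, as required.

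The one genuine obstacle is that the supports of distinct heavy characters can overlap arbitrarily, so the size of their union is \emph{not} governed by how many heavy characters there are --- a plain count would be useless. The greedy ordering is exactly what handles this: each step adds genuinely new coordinates $\sigma_i$, and conditioning on the previously chosen coordinates $x_{J_{i-1}}$ converts the \emph{global} Fourier bound $\abs{\hat q(\alpha^{(i)})}>\gamma$ into an averaged \emph{conditional} bias of $\chi_{\sigma_i}$ with no loss --- precisely the quantity the entropy-deficit estimate consumes. The remaining steps (the elementary bound $h(\tfrac{1+\gamma}{2})\le 1-\gamma^2/(2\ln 2)$ on binary entropy, and keeping track of the logarithm base so the final constant is $2td/\gamma^2$ rather than merely $O(td/\gamma^2)$) are routine bookkeeping.
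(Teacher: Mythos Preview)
Your proof is correct and follows essentially the same route as the paper: both are the entropy proof of Chang's lemma from \cite{IMR14}. The only cosmetic difference is that the paper takes $J$ to be the union of a maximal $\mathbb F_2$-linearly-independent set $S'$ of heavy characters and cites \cite{IMR14} for the bound $|S'|\le 2t/\gamma^2$, whereas you expose the same set $J$ greedily and supply the per-step entropy-deficit calculation inline; the resulting $J$ and the constant $2td/\gamma^2$ coincide.
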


\begin{proof}
Consider some $\gamma > 0$ and
let $S = \{ |\alpha| \leq d : |\hat q(\alpha)| > \gamma \}$.
Let $S' \subseteq S$ denote a maximal set of linearly independent
elements over $\mathbb F_2^n$.
In \cite{IMR14}, it is proved that
$|S'| \leq 2 \gamma^{-2} t$.
Let $J = \bigcup_{\alpha \in S'} \alpha$ so that $|J| \leq 2d\gamma^{-2} t$.
\end{proof}

\begin{remark}
	The claim in \cite{IMR14} (namely Lemma $2$ in \cite{IMR14}) is only stated for a $q=(2^n/|A|) {\bm 1}_{A}$ that
is the (scaled) characteristic
function of a subset $A \subseteq \{-1,1\}^n$, but the proof only uses the entropy of $q$.
A formal statement with a somewhat different proof can be found in \cite[\S 7]{LRS15}.
\end{remark}

\paragraph{Discussion of \pref{lem:entropy}}

It is interesting to note examples for which \pref{lem:entropy} cannot be improved much.
First, suppose that $n$ is odd, and consider the density coming from majority on $n$ bits:
\begin{equation}\label{eq:maj}
q(x) = 2 \cdot \bm{1}_{\{x_1 + \cdots + x_n > 0\}}\,.
\end{equation}
The corresponding measure $\mu_q$ has entropy $n-1$.
In this case, we have $|\hat q(\alpha)| \approx n^{-d/2}$ for $|\alpha|=d$, $d$ odd, and $d \ll\sqrt{n}$.
Thus \eqref{eq:cardJ} is essentially tight for $t=d=1$.

Consider the task of obtaining $|J| = n^{1-\delta}$ and $\gamma = n^{-\omega(1)}$, for some $\delta > 0$.
This is the interesting range of parameters in the next section.  For the majority density \eqref{eq:maj}, this is clearly impossible
in light of our discussion.
On the other hand, if one could obtain a rate of decay of the form $n^{-c(d)}$, with $c(d) \to \infty$ as $d \to \infty$
on the non-junta low-degree Fourier coefficients, then one could improve our main theorem (see \eqref{eq:error-terms}).

Unfortunately, the next example shows that this is impossible.
Let $k,n \in \mathbb N$ be such that $k$ divides $n$, and partition $\{1,2,\ldots,n\} = B_1 \cup B_2 \cup \cdots \cup B_{n/k}$
into $n/k$ disjoint blocks, each of size $k$.  Consider the density
\[
q(x) = \frac{k 2^k}{n} \sum_{i=1}^{n/k} {\bm 1}_{\{x_j = 1 \forall j \in B_i\}}\,.
\]
This function has a transitive symmetry, and thus for $k=o(n)$, does not admit an interesting junta set of size $o(n)$.
On the other hand for any $\alpha \subseteq B_i$, we have
\[
|\hat q(\alpha)| = |\langle q,\chi_{\alpha}\rangle| = \frac{k}{n} 2^{k-|\alpha|}\,.
\]
If we put $k=\sqrt{n}$, then we do not have an appreciable decay of the form $n^{-c(d)}$ with $c(d) \to \infty$ as $d \to \infty$.

But not all hope is lost:  It is plainly clear that $q$ can be approximated by a {\em non-negative combination of non-negative $k$-juntas.}
Furthermore, an approximation of this form would be just as good for us in the arguments that follow.
Thus another possible direction for improving our lower bounds significantly would be to prove a variant of
\pref{lem:entropy}
using an approximation by convex combinations of non-negative juntas, such that one achieves a strong form
of decay on the Fourier coefficients.

Some improvement is possible in this case:  In the setting of
\pref{lem:entropy}, one can achieve a non-negative combination of $k$-juntas
with $k=O(td/\gamma)$ (as opposed to $\gamma^2$); see \cite[\S 7]{LRS15}.
But this approach too reaches a bottleneck:
Suppose that $m$ divides $n$ and partition $[n] = S_1 \cup S_2 \cup \cdots \cup S_{n/m}$
where $|S_i|=m$.
Consider
functions of the form $q(x)=f\left(\chi_{S_1}(x), \chi_{S_2}(x), \ldots, \chi_{S_{n/m}}(x)\right)$
where $\chi_S(x)=\prod_{i \in S} x_i$ is the corresponding Fourier
character and $f: \sbits^{n/m} \to \R_{\geq 0}$ is a function.  The effect of this
operation is to lift the low-degree Fourier coefficients of $f$ to
higher-degree coefficients of $q$,
cutting off the hope for a strong form of decay.
For instance, if $m=\sqrt{n}$ and $f$ is the majority
density on $n/m=\sqrt{n}$ bits (as in \eqref{eq:maj}).

\subsection{Random restrictions}
\label{sec:rand-restr-proof}

  %
  %
  %
  %
  %

We first recall the following standard estimates (see, e.g., \cite{McD98}).
	Suppose $X_1,\ldots, X_n$ are i.i.d $\{0,1\}$ random variables
	with $\E[X_i] = p$.  Then,
\begin{equation}\label{eq:chernoff}
\Pr\left(\sum_{i=1}^n X_i \geq \frac{pn}{2}\right) \geq 1- e^{-pn/8}\,.
\end{equation}
Furthermore,
\begin{equation}\label{eq:easy}
	\Pr\left(\sum_{i=1}^n X_i \geq t\right) \leq \sum_{S \in
		\binom{n}{t}} \Pr\left( \sum_{i \in S} X_i = t \right)
		\leq \binom{n}{t} \cdot p^t \leq (pn)^t
\end{equation}

%

\begin{lemma} \label{lem:random-restriction}
For any $d \in \mathbb N$, the following holds.
Let $Q$ be a collection of densities $q\from \sbits^n\to \Rnn$
such that the corresponding measures $\mu_q$ have entropy at least $n-t$.
If $\card{Q} \leq n^{d/2}$, then
  for all integers $m$ with $3 \leq m \leq n/4$, there exists a set $S\sse [n]$ such
  that
\begin{itemize}
	\item $|S|=m$

	\item  For each $q \in Q$, there exists a set of at most $d$
		coordinates $J(q) \sse S$ such that under
		the distribution $\mu_q$, all $d$-wise
		correlations in $S - J(q)$ are small.  Quantitatively, we have
		$$ \abs{\hat{q}({\alpha})} \leq \left(\frac{16
		mtd}{\sqrt{n}}\right)^{1/2} \qquad
		\forall \alpha \sse S, \alpha \not\subseteq J(q),
		|\alpha| \leq d $$
\end{itemize}
\end{lemma}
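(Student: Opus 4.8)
Fix the collection $Q$ of at most $n^{d/2}$ densities, each with entropy at least $n-t$. First I would apply \pref{lem:entropy} to each density $q \in Q$ with the same degree parameter $d$ and a threshold $\gamma$ to be chosen: this yields for each $q$ a set $J_0(q) \subseteq [n]$ with $|J_0(q)| \leq 2td/\gamma^2$ such that $|\hat q(\alpha)| \leq \gamma$ for all $\alpha \not\subseteq J_0(q)$ with $|\alpha| \leq d$. The issue is that $|J_0(q)|$ may be much larger than $d$, so I cannot just take $J(q) = J_0(q)$ and restrict to it directly; instead I want to choose a random subset $S$ of size $m$ so that $S$ meets each $J_0(q)$ in at most $d$ coordinates. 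Then setting $J(q) = S \cap J_0(q)$, any $\alpha \subseteq S$ with $\alpha \not\subseteq J(q)$ automatically satisfies $\alpha \not\subseteq J_0(q)$, hence $|\hat q(\alpha)| \leq \gamma$.

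The heart of the argument is the probabilistic choice of $S$. Rather than sampling $S$ of size exactly $m$, I would include each coordinate independently with probability $p = 2m/n$ (say), get concentration of $|S|$ around $pn/2 = m$ via \eqref{eq:chernoff}, and trim or pad at the end; alternatively one works directly with a uniformly random $m$-subset and estimates hypergeometrically. For a fixed $q$, the quantity $|S \cap J_0(q)|$ is a sum of indicators with mean at most $p \cdot |J_0(q)| \leq p \cdot 2td/\gamma^2$. Using the tail bound \eqref{eq:easy} (valid in the regime where this mean is at most $1/2$), the probability that $|S \cap J_0(q)| \geq d$ is at most $2(p \cdot 2td/\gamma^2)^d = 2(4mtd/(n\gamma^2))^d$. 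By a union bound over the $|Q| \leq n^{d/2}$ densities, the failure probability is at most $n^{d/2} \cdot 2 (4mtd/(n\gamma^2))^d$, and I need this to be strictly less than $1$ (minus the probability that $|S|$ deviates from $m$, which is exponentially small by \eqref{eq:chernoff}). This forces a constraint relating $\gamma$, $m$, $t$, $d$, $n$: roughly $(4mtd/(n\gamma^2))^d \leq n^{-d/2}/4$, i.e. $\gamma^2 \geq C \cdot mtd \cdot n^{-1/2}$ for a suitable constant. Choosing $\gamma = (16mtd/\sqrt n)^{1/2}$ makes $4mtd/(n\gamma^2) = 1/(4\sqrt n)$, so the union bound gives failure probability at most $n^{d/2} \cdot 2 \cdot (4\sqrt n)^{-d} = 2 \cdot 2^{-d} \cdot n^{-d/2} \cdot n^{d/2}$, wait — more carefully $n^{d/2}(4\sqrt n)^{-d} = n^{d/2} 4^{-d} n^{-d/2} = 4^{-d}$, so the total failure probability is at most $2 \cdot 4^{-d} < 1$ for $d \geq 1$, plus the negligible contribution from $|S|$ concentration. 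Hence a good $S$ exists.

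A couple of routine points to nail down: I should check the side condition $\tfrac{pn}{1-p} \le \tfrac12$ needed to invoke \eqref{eq:easy} — here $p \cdot |J_0(q)| = \Theta(mtd/(n\gamma^2)) = \Theta(1/\sqrt n)$ with the above choice of $\gamma$, which is well below $1/2$ for $n$ large, so the estimate applies (possibly after absorbing constants and using $m \le n/4$, $t \le n$); and I should handle the discrepancy between the independent-inclusion model and the "exactly size $m$" requirement, either by conditioning on $|S| = m$ (adjusting $p$ slightly and noting the conditioning only changes probabilities by a polynomial factor absorbed into the $4^{-d}$ slack) or by padding $S$ up to size $m$ with arbitrary coordinates not in any relevant set — padding is harmless since adding coordinates to $S$ outside $\bigcup_q J_0(q)$ cannot create new $\alpha \subseteq S$ with large Fourier coefficient as long as we also add them to none of the $J(q)$... actually padding with coordinates outside all $J_0(q)$ is fine because such an $\alpha$ still has $\alpha \not\subseteq J_0(q)$. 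The main obstacle is simply getting the parameter bookkeeping in the union bound to close with room to spare; once the exponents $n^{d/2}$ (size of $Q$) and $(1/\sqrt n)^d$ (per-density failure) are seen to cancel leaving a constant factor $4^{-d}$, the rest is bookkeeping. Finally, since $\hat q(\alpha) = \E_{x \sim \mu_q}[\chi_\alpha(x)]$, the Fourier bound is exactly the stated "all $d$-wise correlations in $S - J(q)$ are small."
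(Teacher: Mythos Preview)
Your proposal is correct and follows essentially the same approach as the paper: apply \pref{lem:entropy} with $\gamma = (16mtd/\sqrt n)^{1/2}$ to get the sets $J_0(q)$ of size at most $\sqrt n/(8m)$, sample $S$ by independent inclusion with probability $2m/n$, use \eqref{eq:chernoff} to get $|S|\ge m$ with probability $>1/2$, use \eqref{eq:easy} to bound $\Pr[|S\cap J_0(q)|\ge d]\le 2/(4^d n^{d/2})$, take a union bound over $|Q|\le n^{d/2}$, and finally trim $S$ down to size exactly $m$. The paper simply trims (since $|S|\ge m$ with high probability) rather than padding, which is cleaner than worrying about whether $\bigcup_q J_0(q)$ exhausts $[n]$.
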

\begin{proof}
	We will sample the set $S \sse [n]$ by including each element
	independently with probability $2m/n$,
then argue that with non-zero probability, both the conditions on $S$
hold.

	First, by \eqref{eq:chernoff}, we have $|S| \geq m$
	with probability at least $1-e^{-m/4} > 1/2$.

	Fix $\gamma = \left(\frac{16 mtd}{\sqrt{n}}
	\right)^{\nfrac{1}{2}}$.
	By \prettyref{lem:entropy}, for each $q \in Q$ there exists a
	set $J'(q)$ of at most $\frac{2td}{\gamma^2} \leq \frac{\sqrt{n}}{8m}$ coordinates such
	that for all subsets $\alpha \nsubseteq J'(q)$ with $|\alpha| \leq d$, we have $|\hat q(\alpha)| \leq \gamma$.

	The set $J(q)$ for a distribution $q$ is given by $J(q) =
	J'(q) \cap S$.
	By \eqref{eq:easy},
    we can write
	$$ \Pr{ \left[|J'(q) \cap S| \geq  d
	\right]} \leq  \left(\frac{2m}{n} \cdot |J'(q)|\right)^{d} \leq
	\left(\frac{2m}{n} \cdot \frac{\sqrt{n}}{8 m}
	\right)^{d} \leq \frac{1}{4^d n^{\nfrac{d}{2}}}\,.$$
	The existence of the set $S$ follows by taking a union bound over all the $|Q| \leq
	n^{\nfrac{d}{2}}$ densities in the family $Q$.  Note that we have concluded with $|S| \geq m$,
    but we can remove some elements from $S$ to achieve $|S|=m$.
\end{proof}
%

\subsection{Proof of Main Theorem}
\label{sec:putt-things-togeth}

In this subsection, we will prove \pref{thm:main} and \pref{thm:aux}.  Let $m \leq n$ be parameters $m,n \in \mathbb N$ to be chosen later.
Consider an instance $\inst_0$ of $\maxpi_m$.
Recalling \eqref{eq:sa-opt}, let $\pE$ be a corresponding optimal $d$-\lef, i.e. such that $\pE[\inst_0]=\sa_d(\inst_0)$.

Suppose that $\mathcal L$ is an LP relaxation of size at most $R \leq n^{d/2}$ for $\maxpi_n$.  Our goal is to show that
there exists an instance $\inst$ that is a ``shift'' of $\inst_0$, and a
value $\e_n > 0$ such that $\mathcal L(\inst) \geq \sa_d(\inst_0) - \e_n$, with $\e_n \to 0$ as $n \to \infty$.
By ``shift,'' we mean a planting of the instance $\inst_0$ on some subset of the variables $\{1,2,\ldots,n\}$.
Since $\opt(\inst)=\opt(\inst_0)$, we will conclude our proof by taking $\e_n \to 0$.

\medskip

By \pref{thm:lp-characterization}, there are densities $q_1, q_2, \ldots, q_R : \sbits^n \to \Rnn$
such that for every $\maxpi_n$ instance $\inst$, we have
\begin{equation}\label{eq:factor}
\mathcal L(\inst) - \inst = \lambda_0(\inst) + \sum_{i=1}^R \lambda_i(\inst) q_i\,,
\end{equation}
for some non-negative numbers $\lambda_i(\inst)$ depending on $\inst$.
For some $t \geq 0$ to be chosen later,
let \[Q_t = \{ 1 \leq i \leq R : \|q_i\|_{\infty} \leq 2^t \}\,.\]
Observing that the left-hand side of \eqref{eq:factor} is pointwise at most 1, for any $i \notin Q_t$, we must have $\lambda_i(\inst) \leq 2^{-t}$ for every instance
$\inst$.
At this point, one should also observe that $\sum_{i=0}^R \lambda_i(\inst)\leq 1$ by taking expectations over both sides of \eqref{eq:factor}.

If $i \in Q_t$, then since $\|q_i\|_{\infty} \leq 2^t$, we can lower
bound the entropy of $\mu_{q_i}$
as follows,
$$H(\mu_{q_i}) = \sum_{x \in \sbits^n} \frac{q_i (x)}{2^n} \log
\frac{2^n}{q_i(x)} \geq \left(\sum_{x \in \sbits^n} \frac{q_i(x)}{2^n}
\right) \cdot  \log
\frac{2^n}{\|q_i\|_{\infty}} = \log \frac{2^n}{\|q_i \|_{\infty}} \geq
n-t \mper$$
Apply \pref{lem:random-restriction} to the set of densities with index
in $Q_t$, and let $S \subseteq [n]$
with $|S|=m$ be the subset whose existence is guaranteed.  Let $\inst_S$ denote the instance $\inst_0$ planted
on the subset $S$, and similarly let $\pE_S$ be the Sherali--Adams functional
$\pE$ planted on $S$.  Equation \eqref{eq:factor} gives us a representation of the form
\begin{equation}\label{eq:factor2}
\mathcal L(\inst_S) - \inst_S = \lambda_0(\inst_S) + \sum_{i=1}^R \lambda_i(\inst_S) q_i\,,
\end{equation}

For each $i \in Q_t$, let $q^S_i = \sum_{\alpha \subseteq S} \hat q_i(\alpha) \chi_{\alpha}$.  Observe that $q^S_i$
is the conditional density on the variables in $S$ (equivalently, we obtain $q^S_i$ by averaging
over all variables outside $S$).  By our application of \pref{lem:random-restriction}, we can write
$q^S_i = \tilde q^S_i + e_i$ where $\tilde q^S_i=\sum_{\alpha \subseteq J(q_i) \cap S} \hat q_i(\alpha) \chi_{\alpha}$
is a non-negative $d$-junta and $|\hat e_i(\alpha)| \leq \left(\frac{16mtd}{\sqrt{n}}\right)^{1/2}$
for all $|\alpha| \leq d$.

Using the fact that $\pE_S$ only depends on variables in $S$, we have $\pE_S(q_i)=\pE_S(q^S_i)$ for all $i \in Q_t$.
Also observe that for $i \notin Q_t$, we have $|\tilde {\mathbb E}_S(q_i)| \leq \|\tilde {\mathbb E}_S\|_{\infty} \leq {m \choose d}$,
first using $\E q_i=1$ and then using property (iii) of \pref{lem:saprops}.
Now we apply $\pE_S$ to both sides of \eqref{eq:factor2} to obtain
\begin{eqnarray*}
\mathcal L(\inst_S) - \sa_d(\inst_0) &=& \lambda_0(\inst_S) + \sum_{i \in Q_t} \lambda_i(\inst_S) \tilde {\mathbb E}_S(q^S_i)
+ \sum_{i \notin Q_t} \lambda_i(\inst_S) \tilde {\mathbb E}_S(q_i) \\
&\geq &
\sum_{i \in Q_t} \lambda_i(\inst_S) \left(\tilde {\mathbb E}_S(\tilde q^S_i) + \tilde {\mathbb E}_S(e_i)\right)
- \|\tilde {\mathbb E}_S\|_{\infty} \sum_{i \notin Q_t} \lambda_i(\inst_S) \\
&\geq &
\sum_{i \in Q_t} \lambda_i(\inst_S) \tilde {\mathbb E}_S(e_i)
- {m \choose d} n^{d/2} 2^{-t}\,,
\end{eqnarray*}
where in the final line we have used the fact that $\tilde q_i^S$ is a non-negative $d$-junta (along with property (i) of \pref{lem:saprops}),
the fact that $\lambda_i(\inst_S) \leq 2^{-t}$ for $i \notin Q_t$,
and our assumption that the total number of indices $i \in \{1,2,\ldots,R\}$ is at most $n^{d/2}$.

Finally, it remains to observe that \begin{equation}\label{eq:error-terms}
|\tilde {\mathbb E}_S(e_i)| \leq \sum_{\alpha \subseteq S} |\tilde {\mathbb E}_S [\chi_{\alpha}]| \cdot |\hat e_i(\alpha)|
\leq {m \choose d} \left(\frac{16mtd}{\sqrt{n}}\right)^{1/2}\,,\end{equation}
where we have employed property (ii) of \pref{lem:saprops}.
Plugging this estimate into the preceding inequality yields
\[
\mathcal L(\inst_S) - \sa_d(\inst_0) \geq -{m \choose d} \left(\frac{16mtd}{\sqrt{n}}\right)^{1/2} - {m \choose d} n^{d/2} 2^{-t}\,.
\]
If we set $t=d \log_2 n$, then $\mathcal L(\inst_S)\geq  \sa_d(\inst_0) -\e_n$, where
\begin{equation}
\label{eq:eta}
\e_n = O\left(\frac{m^d \sqrt{md \log n}}{n^{1/4}}\right)\,.
\end{equation}
Clearly for $m,d$ fixed, we have $\e_n \to 0$ as $n \to \infty$, completing the proof
of \pref{thm:main}.
\qed

\begin{proof}[Proof of \pref{thm:aux}]
	Fix an instance size $m$ and put $d = f(m)$.  In the preceding argument,
    require that $n$ grows like $m^{10d} = m^{10 f(m)}$ so that $\e_n = o(1)$ (see \eqref{eq:eta}).
    The lower bound achieved is $n^{d/2} \geq m^{5 f(m)^2}$.
\end{proof}

\subsection{Nonnegative rank}
\label{sec:smoothrank}

The lower bounds of \pref{thm:main}
can be stated equivalently in terms of nonnegative rank.
We recall that the nonnegative rank of a nonnegative matrix $A \in \mathbb R_+^{m \times n}$
is defined by
\[
\rank_+(A) = \min \left\{r : A_{ij} = \langle u_i, v_j\rangle \textrm{ for some } \{u_i, v_j\} \subseteq \mathbb R_+^r \right\}\,.
\]
Fix $n \geq 1$.
Let \(M=(M_{G,x})\) be the matrix indexed by \(n\)-vertex \maxcut
instances with \maxcut value at most \(s\)
(e.g., \(s=1/2+\gamma\)) and bipartitions \(x\in\{\pm 1\}^n\) such that \[
M_{G,x} = c - G(x)
\,,
\]
where $G(x)$ denotes the fraction of edges crossing the bipartition corresponding to $x$.
A corollary of \pref{thm:main} is that $\rank_+(M) \geq n^{\Omega\left(\frac{\log n}{\log \log n}\right)}$.

Define for $\e > 0$, the {\em \(\e\)-smooth nonnegative rank of a matrix} \(A\) as \[
{\mathrm{rank}_{+,\e}}(A) \seteq \min  \left\{ {\mathrm{rank}_{+}}(A') \mid (1-\e) A_{ij} \le A'_{ij} \le (1+\e)A_{ij} \,\,\forall i,j \right\}
\mcom
\]

Our main result shows that \(M\) has superpolynomial nonnegative rank.
We claim that \(M\) has only polynomial nonnegative approximate rank
for every fixed $\eps > 0$.
(Since the entries of \(M\) are bounded from above and bounded away from \(0\), the notions of approximate and smooth nonnegative rank coincide.)
In order to demonstrate that \(M\) has small approximate nonnegative
rank, for each $t \in \N$, we will exhibit a matrix $M'$ that approximates $M$ well,
$\left| M'_{G,x} - M_{G,x} \right| \leq 2^{-\Omega(t)}$ for all $G$ and
$x$, but has a small nonnegative rank, i.e., $\mathrm{rank}_{+}(M')
\leq n^{O(t)}$.
To this end, we will use the reformulation of nonnegative rank as a communication
model discussed in \pref{sec:background}.  Consider the following communication protocol between Alice and Bob:
\begin{itemize}
\itemsep1pt\parskip0pt\parsep0pt
\item
  Alice receives as input an \(n\)-vertex graph \(G\) with \maxcut value at most \(s\).
\item
  Bob receives as input a bipartition \(x\in\{\pm 1\}^n\).
\item
  Alice chooses \(t\) edges \(e_1,\ldots,e_t\) of \(G\) independently at random and sends the endpoints of the sampled edges to Bob.
\item
  Bob computes what fraction \(\theta\) of the edges \(e_1,\ldots,e_t\) cross the bipartition \(x\).
  If \(\theta > c\), then Bob outputs \(0\) else Bob outputs \(c-\theta\).
\end{itemize}

Let $M'$ be the matrix computed by the above protocol, i.e.,
$M'_{G,x}$ is the expected value of Bob's output when Alice receives
the graph $G$ and Bob receives $x$ as input.
This protocol yields a rank-\(2^{t\cdot 2\log n}\) nonnegative factorization of the matrix \(M'\),
as follows.

Let \(\theta_{G,x}\) be the random variable given by Bob's output when the inputs for Alice and Bob are \(G\) and \(x\), respectively.
Then, \(M'_{G,x}=c-\E \theta_{G,x}\) for all \(G\) and \(x\).

At the same time, we have
\[
M_{G,x}=c-\E \left[\theta_{G,x}\right] - \delta_{G,x}\,,
\]
where
\[
\delta_{G,x} \leq 2 \Pr[\theta_{G,x} > c]\,.
\]

In words, the discrepancy between the computed matrix \(M'\) and the target matrix \(M\) is accounted for by the probability of the events \(\{\theta_{G,x}>c\}\).
However, since \(G(x)\le s <c\) is bounded away from \(c\) by some
constant,
a standard Chernoff bound yields
\[\Pr[\theta_{G,x}> c]\le 2^{-\Omega(t)}\,,\] for all \(G\) and \(x\).
It follows that the matrix \(M'\) satisfies \({\mathrm{rank}_{+}}(M')\le n^{O(t)}\) and \(\max_{G,x} \lvert M'_{G,x} - M_{G,x} \rvert\le 2^{-\Omega(t)}\).

\Jnote{
\subsection{Extension to non-Boolean CSPs}
\label{sec:non-Boolean}
}

 \section{Symmetric linear programs}
 \label{sec:symm-formulations}

We will now prove the following theorem relating Sherali--Adams gaps to those
for symmetric LPs for \maxcut.  While this connection holds more generally for max-CSP problems,
we will focus on \maxcut for clarity.
Recent work has extended these ideas to problems like TSP \cite{LRST14}, and also to
a connection between symmetric SDPs and the Sum-of-Squares hierarchy \cite{LRST14,FSP14}.

\begin{theorem}
\label{thm:symmetric}
Fix a $k$-ary CSP $\maxpi$ over the boolean domain.
Suppose that, for some numbers $m, d > 0$, the $d$-round Sherali--Adams relaxation
for $\maxpi_m$ cannot achieve a $(c,s)$-approximation.
Then no symmetric LP of size $\leq \binom nd$ can achieve a
$(c,s)$-approximation on $\maxpi_n$ where $n=2m$.
\end{theorem}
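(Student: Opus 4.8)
The plan is to run the argument of \pref{thm:main} (\pref{sec:putt-things-togeth}) but to replace its random‑restriction step (\pref{lem:random-restriction}) by a purely group‑theoretic argument exploiting symmetry; this is what will let us afford the larger size bound $\binom nd$ and the much milder relation $n=2m$ in place of $N\le n^{10f(n)}$. First I would apply \pref{thm:lp-characterization}: a symmetric LP of size $R\le\binom nd$ achieving a $(c,s)$‑approximation for $\maxpi_n$ produces non‑negative functions $q_1,\dots,q_R\colon\sbits^n\to\Rnn$, which I may normalize to be densities, such that the family $\{q_1,\dots,q_R\}$ is closed under $\Sym(n)$ and $c-\inst\in\cone(\mathds 1,q_1,\dots,q_R)$ for every instance $\inst$ with $\opt(\inst)\le s$.

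Next I would partition the family into $\Sym(n)$‑orbits. By orbit--stabilizer, each $q_i$ satisfies $[\Sym(n):\Stab(q_i)]=|\mathrm{orbit}(q_i)|\le R\le\binom nd$, and here I invoke the classical description of low‑index subgroups of the symmetric group: for $n$ large relative to $d$, a subgroup of $\Sym(n)$ of index at most $\binom nd$ contains $\mathrm{Alt}(W)$ for some $W\subseteq[n]$ with $|[n]\setminus W|\le d$. Thus for each $i$ there is a set $T_i$ with $|T_i|\le d$ so that $q_i$ is invariant under $\mathrm{Alt}([n]\setminus T_i)$, and since $|[n]\setminus T_i|\ge 3$ this forces $q_i(x)$ to depend only on $x_{T_i}$ and on $\textstyle\sum_{j\notin T_i}x_j$. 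Consequently the conditional density $\tilde q_i:=\E[q_i\mid x_{T_i}]=\sum_{\alpha\subseteq T_i}\hat q_i(\alpha)\chi_\alpha$ is a non‑negative $d$‑junta, while for $\alpha\not\subseteq T_i$ the coefficient $\hat q_i(\alpha)$ depends only on $(\alpha\cap T_i,|\alpha\setminus T_i|)$, so by Parseval $|\hat q_i(\alpha)|\le\|q_i\|_\infty^{1/2}\binom{n-d}{|\alpha\setminus T_i|}^{-1/2}$ — tiny as soon as $\alpha$ meets $[n]\setminus T_i$ in at least one (but few) coordinates. So each $q_i$ is automatically a non‑negative $d$‑junta plus a remainder whose low‑degree Fourier mass is spread over essentially all coordinates.

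Then I would carry out the planting. Let $\inst_0$ be a $\maxpi_m$ instance with $\sa_d(\inst_0)>c$ and let $\pE$ be an optimal $d$‑local expectation functional for it. Writing $[n]=[2m]$, I plant $\inst_0$ on a suitable $m$‑subset $S$ and let $\pE_S$ be $\pE$ supported on $S$, so $\pE_S(\inst_S)=\sa_d(\inst_0)$. Applying $\pE_S$ to $c-\inst_S=\lambda_0+\sum_i\lambda_i q_i$ and writing $\pE_S(q_i)=\pE_S(\tilde q_i)+\pE_S(q_i-\tilde q_i)$ with $\pE_S(\tilde q_i)\ge 0$ by \pref{lem:saprops}(i), I get
\[
c-\sa_d(\inst_0)=\lambda_0+\sum_i\lambda_i\,\pE_S(q_i)\ \ge\ \sum_i\lambda_i\,\pE_S(q_i-\tilde q_i)\,,
\]
so it suffices to show the right‑hand side exceeds the negative quantity $c-\sa_d(\inst_0)$. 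The densities with $\|q_i\|_\infty$ large carry exponentially small coefficients $\lambda_i$ (because $c-\inst_S\le 1$ pointwise), and since there are at most $\binom nd$ of them and $\|\pE_S\|_\infty\le\binom md$ by \pref{lem:saprops}(iii), they are harmless. For the remaining (smooth) $q_i$, only characters $\chi_\alpha$ with $\alpha\subseteq S$ and $\alpha\not\subseteq T_i$ survive in $\pE_S(q_i-\tilde q_i)$, and each such $\alpha$ uses a coordinate outside $T_i$, so the Parseval bound above controls every such contribution.

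The step I expect to be the main obstacle is making this last control exact. Unlike in \pref{thm:main}, here $n=2m$ is \emph{fixed} — it is tied to the size of the Sherali--Adams gap instance — so there is no $n\to\infty$ limit in which to bury a residual error, and the factor $\binom md=\binom{n/2}{d}$ is far from constant; a naive Parseval estimate does not close. This is where the freedom in choosing $S$ (and in extending $\pE$ off $S$) must be used: one wants $S$ chosen so that the size‑$\le d$ junta sets $T_i$ of the densities with non‑negligible coefficient lie outside $S$, so that their restrictions to $S$ are genuinely symmetric non‑negative functions, and $\pE$ chosen off $S$ so that these symmetric tails are evaluated non‑negatively rather than merely bounded. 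Organizing this dichotomy — separating the $q_i$ whose $T_i$ can be dodged from those whose contribution must be shown negligible for another reason (bounded coefficient or concentration of the remainder) — so that it works simultaneously for all orbits, is the crux; everything else is a rerun of the machinery of \pref{sec:LPs} together with the low‑index‑subgroup input.
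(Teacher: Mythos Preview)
Your setup is right and matches the paper: apply \pref{thm:lp-characterization} to get a $\Sym(n)$-invariant family $\{q_i\}$, then use the low-index-subgroup fact (the paper cites Yannakakis) to deduce that each $q_i$ depends only on a set $J_i$ of at most $d$ coordinates together with the value $\sum_{j=1}^n x_j$. After that, however, you head in the wrong direction.

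You try to rerun the Fourier/error machinery of \pref{sec:LPs}: write $q_i=\tilde q_i+(q_i-\tilde q_i)$ with $\tilde q_i$ a $d$-junta, and bound $\pE_S(q_i-\tilde q_i)$. As you yourself note, this does not close when $n=2m$ is fixed: the count $\binom md$ of low-degree characters supported in $S$ swamps the $O(n^{-1/2})$ Parseval decay, and your proposed fix of ``dodging'' the sets $T_i$ cannot work because the $T_i$'s within a single orbit already range over all $d$-subsets of $[n]$. The further suggestion of extending $\pE$ off $S$ so that symmetric tails evaluate non-negatively is vague; a symmetric non-negative function on $\sbits^m$ is not a $d$-junta, so a $d$-\lef has no reason to be non-negative on it.

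The missing idea is that you can kill the dependence on $\sum_j x_j$ \emph{exactly}, with no error term at all. Plant $\inst_0$ on the first $m$ coordinates (the remaining $m$ are dummies), so $\inst'(x_1,\dots,x_n)=\inst_0(x_1,\dots,x_m)$ and $\opt(\inst')=\opt(\inst_0)\le s$. In the certificate
\[
c-\inst'(x)=\lambda_0+\sum_i\lambda_i\,q_i(x)
\]
substitute $x=(x_L,-x_L)$ for $x_L\in\sbits^m$. Then $\sum_{j=1}^n x_j\equiv 0$, so each $h_i(x_L):=q_i(x_L,-x_L)$ loses its symmetric part and becomes a genuine non-negative $d$-junta on $\sbits^m$. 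You are left with
\[
c-\inst_0(x_L)=\lambda_0+\sum_i\lambda_i\,h_i(x_L),
\]
and applying the optimal $d$-\lef $\pE$ for $\inst_0$ gives $c-\sa_d(\inst_0)\ge 0$ by \pref{lem:saprops}(i), contradicting $\sa_d(\inst_0)>c$. This is why $n=2m$ suffices and no truncation, random restriction, or Fourier estimate is needed.
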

We note here that the Sherali-Adams hierarchy produces symmetric
linear programs.  Hence, the above result can be viewed as asserting
that Sherali-Adams hierarchy is {\it complete} for the class of
symmetric linear programs.

By appealing to the known Sherali--Adams gaps for \maxcut
\cite{charikar2009integrality} and \maxtwosat, \maxthreesat
\cite{schoenebeck2008linear}, we get the same integrality gaps for
arbitrary symmetric LPs.  For example, in the case of \maxcut, we
obtain the following lower bound.
\begin{corollary}
For every $\eps > 0$, there exists $\delta > 0$ such that no
symmetric linear program of size $2^{n^{\delta}}$ yields a $(1-\eps,
\frac{1}{2}+\eps)$-approximation for \maxcut.
\end{corollary}

In order to prove \prettyref{thm:symmetric}, we will need the
following characterization of symmetric function families.
\begin{lemma} \label{lem:symmetry-family}
	Suppose a family of functions $Q = \{q_i : \sbits^n \to
	\mathbb R : i=1,2,\ldots,R\}$ is
	closed under the action of $\Sym(n)$.
 If $R < \binom{n}{d}$
	for $d < n/4$, then each function $q_i$ depends only on a
	subset $J_i \subseteq [n]$ of at most $d$ coordinates and
	possibly the value  of the sum $\sum_{j=1}^n
	x_j$.
\end{lemma}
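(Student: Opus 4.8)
The plan is to exploit the transitivity of the $\Sym(n)$-action on the orbit of each $q_i$, combined with the smallness of $R$, to show that no $q_i$ can ``see'' many coordinates beyond a symmetric function of the whole input. First I would fix an index $i$ and let $O_i = \{\sigma q_i : \sigma \in \Sym(n)\}$ be the orbit of $q_i$ under the group action; since $Q$ is closed under $\Sym(n)$, we have $O_i \subseteq Q$, hence $|O_i| \leq R < \binom nd$. The stabilizer $\Stab(q_i) = \{\sigma : \sigma q_i = q_i\}$ therefore satisfies $[\Sym(n) : \Stab(q_i)] = |O_i| < \binom nd$. The key group-theoretic fact I would invoke is a classical result about subgroups of $\Sym(n)$ of small index: any subgroup $H \leq \Sym(n)$ with $[\Sym(n):H] < \binom nd$ (for $d < n/4$) must contain the pointwise stabilizer of some set $J$ of size at most $d$ — equivalently, $H \supseteq \aut([n] \setminus J)$, the symmetric group acting on the complement of $J$. (This is the Bochert/Liebeck-type bound on minimal degree / small-index subgroups; the exact constant $n/4$ is what makes $\binom nd$ the right threshold.)

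Granting that, the argument finishes cleanly. Let $J = J_i$ be the guaranteed set of size $\leq d$, so that every permutation fixing $J$ pointwise also fixes $q_i$ as a function. I would then argue that a function invariant under $\Sym([n]\setminus J)$ must depend only on the coordinates in $J$ together with a symmetric function of the coordinates outside $J$; but since $\{-1,1\}$-valued coordinates have a symmetric function determined by their sum, this means $q_i$ is a function of $\{x_j : j \in J\}$ and $\sum_{j \notin J} x_j$. Finally, I would rewrite $\sum_{j\notin J} x_j = \big(\sum_{j=1}^n x_j\big) - \sum_{j \in J} x_j$; since $q_i$ already depends on $(x_j)_{j\in J}$, depending additionally on $\sum_{j\notin J} x_j$ is the same as depending on $\sum_{j=1}^n x_j$. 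This yields exactly the claimed form: $q_i$ depends only on $J_i$ (with $|J_i|\leq d$) and possibly on $\sum_{j=1}^n x_j$.

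The main obstacle is establishing the group-theoretic input: that a subgroup of $\Sym(n)$ of index less than $\binom nd$ must contain $\Sym$ of the complement of a $d$-set. One route is via the classical theorem (essentially Bochert's, refined by Liebeck/Praeger–Saxl) on the minimal degree of primitive groups, which forces any ``large'' subgroup to be intransitive or, after restricting attention, to act as a full symmetric group on a large block; a cleaner self-contained route is a direct counting/orbit argument showing that if no such $J$ existed then the orbit of $q_i$ would be too large, by exhibiting $\geq \binom nd$ distinct images. I would aim to present the self-contained version: suppose $q_i$ genuinely depends on more than a $d$-set in the strong sense that for every $J$ of size $d$ there exist two inputs agreeing outside $J$ on which $q_i$ differs after some relabeling; then produce $\binom nd$ distinct functions in the orbit, contradicting $R < \binom nd$. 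Making ``genuinely depends'' precise — and handling the interaction with the already-allowed global sum $\sum_j x_j$ — is the delicate part, and I expect the bulk of the write-up to go there. Once that combinatorial dichotomy is in place, the remainder is routine.
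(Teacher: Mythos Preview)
Your overall strategy matches the paper's: bound the orbit size by $R < \binom nd$, deduce small index of the stabilizer, invoke a structural fact about small-index subgroups of $\Sym(n)$, and conclude that $q_i$ is symmetric in the coordinates outside a small set $J_i$.

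However, the group-theoretic input you state is \emph{false}. It is not true that every subgroup $H \leq \Sym(n)$ of index $< \binom nd$ contains the full symmetric group $\Sym([n]\setminus J)$ for some $|J|\le d$. The alternating group $A_n$ has index $2 < \binom nd$ but contains no $\Sym([n]\setminus J)$ for $|J| < n-1$, since any such $\Sym([n]\setminus J)$ contains odd permutations (transpositions). So neither your Bochert/Liebeck route nor your proposed counting argument can deliver the statement you wrote; the ``main obstacle'' you flagged is in fact insurmountable in that form.

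What is true---and what the paper cites as Yannakakis's lemma---is the weaker conclusion that $H$ contains all \emph{even} permutations fixing some $J$ with $|J|\le d$. The paper then supplies the missing upgrade step, which you omitted and which genuinely uses the binary alphabet: since $|\bar J_i| > 3n/4 \ge 3$, every $x\in\sbits^n$ has two equal coordinates $a,b\in \bar J_i$; the transposition $\pi_{ab}$ fixes $x$, so for any odd $\sigma\in\Sym(\bar J_i)$ one has $q_i(\sigma x) = q_i(\sigma\pi_{ab} x) = q_i(x)$ because $\sigma\pi_{ab}$ is even. This shows $q_i$ is invariant under the full $\Sym(\bar J_i)$ as a function on $\sbits^n$, even though $\Stab(q_i)$ need not literally contain $\Sym(\bar J_i)$ as a subgroup. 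Once you have this, your final two paragraphs go through as written.
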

\begin{proof}
Here we will need a few basic notions about group actions.
A group $\cG$ {\it acts} on a universe $\cX$, if each element $g
\in \cG$ {\it permutes} the elements of the universe $\cX$, and this
action commutes with the group operation.  Formally,
a group action is defined by a map $\iota: \cG \times \cX \to \cX$ such that
$\iota(g,\iota(h,x)) = \iota(gh, x)$ for all $g,h \in \cG$ and $x \in
\cX$.  For convenience, we will denote $g \cdot x \defeq \iota(g,x)$.

For an element $x \in \cX$, its orbit $\orb(x)$ is given by $\orb(x) = \{ g
\cdot x \mid g \in \cG\}$ and its stabilizer is given by $\Stab(x) =
\{g \in \cG \mid g\cdot x = x \}$.  A basic fact from group theory is
that for every action of a finite group $\cG$ and every $x \in X$,
it holds that $| \Stab(x) | \cdot |
\orb(x)| = | \cG|$.

The group $\Sym(n)$ of all permutations on $n$ elements acts on the
space of functions over $\sbits^n$ by permutation of the coordinates.
Let $\orb(f)$ denote the orbit of a function $f$ under the
action of $\Sym(n)$, and let $\Stab(f)$ denote the
stabilizer of $f$.  Since $Q$ is closed under this action,
it contains the orbits of each of the functions $q_1,
\ldots, q_R$.

This implies that for each $i \in [R]$, we have
$|\orb(q_i)| <
\binom{n}{d}$.  Since $| \orb(q_i) | \cdot | \Stab(q_i) | =
| \Sym(n)| = n!$, we conclude that for each $i\in [R]$,
$|\Stab(q_i)| \geq d!(n-d)!$.
At this point, we appeal to the following group theoretic fact that we
borrow from the work of Yannakakis \cite{Yannakakis91}.

\begin{lemma}[{\cite[Claim~2]{Yannakakis91}}]
	\label{lem:grouptheory}
Let $H$ be a group of permutations whose index in $\Sym(n)$ is at most
$\binom{n}{d}$ for some $d < n/4$.  Then there exists a set $J \subseteq \{1,2,\ldots,n\}$ of size at
most $d$ such that $H$ contains all even permutations that fix
the elements of $J$.
\end{lemma}

By \prettyref{lem:grouptheory}, the stabilizer subgroup
$\Stab(q_i)$ contains all even permutations that fix a subset of
coordinates $J_i$ with $|J_i| \leq d$.
We claim that $\Stab(q_i)$ contains all permutations that fix the
coordinates in $J_i$.  We know that for every $x \in \sbits^n$, and every even permutation $\sigma \in \Sym(\bar J_i)$
we have, $q_i(x) = q_i(\sigma x)$.
Here, we use $\Sym(\bar J_i)$ to denote the subgroup of $\Sym(n)$ fixing elements in $J_i$.

For every $x \in \sbits^n$, there will be
two coordinates $a,b \in \bar{J_i}$ such that $x_a = x_b$.  Let
$\pi_{ab}$ denote the transposition that swaps $a$ and $b$.   Since
$\pi_{ab}(x) = x$, we have $q_i(\pi_{ab}(x)) = q_i(x)$.  So for even permutations $\sigma \in \Sym(\bar J_i)$,
$$ q_i(\sigma \pi_{ab} x) = q_i(\pi_{ab} x) = q_i(x)\,.$$
As $\sigma$ varies over all even permutations in $\Sym(\bar J_i)$, $\sigma \pi_{ab}$
varies over all odd permutations in $\Sym(\bar J_i)$, leading to the conclusion that
$\Sym(\bar J_i) \subseteq \Stab(q_i)$.

This symmetry of the function $q_i(x)$ implies that it depends only on
the assignment to coordinates in $J_i$ and the hamming weight of the assignment to
coordinates in $\bar{J_i}$, i.e. the value $\sum_{j=1}^n x_j - \sum_{j \in J_i} x_j$.
This shows that $q_i$
is a function
depending only on the coordinates in $J_i$ and the value $\sum_{i \in [n]} x_i$.
\end{proof}

We are now in position to prove the main theorem of this section.

\begin{proof}[Proof of \pref{thm:symmetric}]
	Let $\cL$ be a symmetric LP relaxation for $\maxpi_n$ of size
	$R \leq \binom{n}{d}$.
    Supposing that this relaxation achieves a $(c,s)$-approximation,
    we will derive a contradiction.

By
	applying \prettyref{thm:lp-characterization}, there exists a
	family of functions $Q = \{q_1,\ldots, q_R \from \sbits^n \to  \R_{\geq 0}\}$
	such that for every instance $\inst$ of
	$\maxpi_n$ with $\opt(\inst) \leq s$, we have
	$$ c - \inst = \lambda_0 + \sum_{i=1}^R \lambda_i q_i\,.$$
	for some non-negative weights $\{\lambda_i\}_{i=0}^R$.
	Moreover, the family of functions $Q$ is invariant under
	the action of $\Sym(n)$.
	Therefore, by
	\prettyref{lem:symmetry-family}, each of the functions $q_i \in Q$
	depends on a set $J_i$ of at most $d$ coordinates and possibly the value $\sum_{i=1}^n
	x_i$.

	Fix an instance $\inst$ of $\maxpi_m$ on which the $d$-round
	Sherali--Adams relaxation fails to achieve a
	$(c,s)$-approximation, i.e.,
	$$ \opt(\inst) \leq s \qquad \text{ and } \qquad \sa_d(\inst) >
	c\,.$$

	For $n = 2m$, construct an instance $\inst'$ of $\maxpi_n$ by including $m$
	additional dummy variables in $\inst$ with no constraints
	among them.  Concretely, if $X_1,\ldots, X_n$ are
	variables in $\inst'$, then restricted to the variables $X_1,
	\ldots, X_m$, the constraints are identical to $\inst$ while
	there are no constraints among $X_{m+1},\ldots, X_n$.

	For an assignment $x \in \sbits^n$, we will denote $x_A =
	(x_1,\ldots, x_m)$ and $x_B = (x_{m+1},\ldots, x_n)$.  In this
	notation, it is easy to see that for every assignment $x$,
	$$ \inst'(x) = \inst(x_A) \mper$$

	By construction, we have $\opt(\inst') = \opt(\inst) \leq s$.
	Since the symmetric LP relaxation $\cL$ yields a
	$(c,s)$-approximation to $\maxpi_n$, there exist $\{ \lambda_i \geq 0\}_{i=0}^R$ such that
	$$ c - \inst' = \lambda_0 + \sum_{i=1}^R \lambda_{i} q_i \mper$$
	Using $\inst(x) = \inst(x_A)$, we can rewrite the above
	identity as,
		$$ c - \inst(x_A) = \lambda_0 + \sum_{i=1}^R \lambda_i
		q_i(x_A,x_B) \mper$$
	Define $h_i:\sbits^m \to \R_{\geq 0}$ as $h_i(x) = q_i(x,-x)$.
	Setting $x_B = -x_A$ in the above identity, we arrive at
	\begin{equation} \label{eq:SAjunta}
	 c - \inst(x_A) =  \lambda_0 + \sum_i \lambda_i h_i(x_A) \mper
	\end{equation}
	Recall that each of the functions $q_i$ depends on a subset
	$J_i$ of at most $d$ coordinates and possibly the value of
	$\sum_{i=1}^n x_i$.  This implies that $h_i(x) = q_i(x,-x)$ is a
	$d$-junta, since the sum of all the coordinates of $(x,-x)$ is always
	equal to $0$.  In particular, the identity in
	\eqref{eq:SAjunta} expresses the function $c - \inst$ as a
	non-negative combination of $d$-juntas.

	Let $\pE$ denote the $d$-round Sherali--Adams functional for
	the instance $\inst$.  Apply the $\pE$ functional to both
	sides of \eqref{eq:SAjunta} to obtain a contradiction.  By
	\prettyref{lem:saprops} and the fact that each $h_i$ is a
	non-negative $d$-junta, we have $\pE[h_i] \geq 0$.
	On the other hand, the
	left hand side  $\pE (c - \inst) = c - \sa_d(\inst) < 0$.
\end{proof}

\section{Conclusion}
\label{sec:conclusion}

We have shown that for constraint satisfaction problems,
there is an intimate relationship between general polynomial-sized linear programs
and those arising from $O(1)$ rounds of the Sherali--Adams hierarchy.
There are a few natural questions that readily suggest themselves.

Firstly, our quantitative bounds are far from optimal.  For instance,
it is known that the integrality gap of $1/2+\varepsilon$ for \maxcut
persists for $n^{c_{\varepsilon}}$ rounds of Sherali-Adams hierarchy,
where $c_\varepsilon$ is some constant depending on $\varepsilon$ \cite{charikar2009integrality}, while
we are only able to prove an integrality gap for LPs of size $n^{o(\frac{\log n}{\log \log n})}$.
This is due to the factor of $m^d$ appearing in our Fourier estimate \eqref{eq:error-terms}.

\begin{question}\label{ques:improve}
Is it the case that for approximating (boolean) max-CSP problems on $n$ variables, linear programs of size $R(n)$ are only
as powerful as those arising from $\mathrm{poly}(\frac{\log {R(n)}}{\log n})$ rounds of the Sherali--Adams hierarchy?
\end{question}

Secondly, given the connection for linear programs, it is natural to suspect that a similar
phenomenon holds for SDPs.

\begin{question}
For max-CSP problems, is there a connection between the efficacy of general SDPs and those from the Sum-of-Squares SDP hierarchy \cite{Laserre01,Parrilo00}?
\end{question}

As mentioned in the introduction, recent work \cite{LRS15}
yields a positive solution to this question, although
the approach has similar limitations to those
highlighted in Question \ref{ques:improve}.

Finally, our techniques have made very strong use of the product structure on the space
of feasible assignments for CSPs.
One might hope to extend these connections to other types of problems
like TSP and finding maximum-weight perfect matchings in general graphs \cite{Rothvoss14,Yannakakis91} or approximations
for vertex cover.
See \cite{BPZ15} for progress on the latter problem.

\subsection*{Acknowledgements}

We thank the anonymous referees for many useful suggestions and observations.

S. O.~Chan was supported by NSF grants CCF-1118083 and CCF-1017403.
P.~Raghavendra was supported by NSF Career Award CCF-1343104 and an Alfred
P. Sloan Fellowship.  J.~R.~Lee was supported by NSF grants CCF-1217256 and CCF-0905626.
D.~Steurer was supported by NSF grants, an Alfred P. Sloan Fellowship, and a Microsoft Research Faculty Fellowship.

\addreferencesection
\bibliographystyle{amsalpha}
\bibliography{lpsize,afmed2013}

\newcommand{\etalchar}[1]{$^{#1}$}
\providecommand{\bysame}{\leavevmode\hbox to3em{\hrulefill}\thinspace}
\providecommand{\MR}{\relax\ifhmode\unskip\space\fi MR }
\providecommand{\MRhref}[2]{%
  \href{http://www.ams.org/mathscinet-getitem?mr=#1}{#2}
}
\providecommand{\href}[2]{#2}
\begin{thebibliography}{FdlVKM07}

\bibitem[ABL02]{AroraBollobasLovasz02}
Sanjeev Arora, B{\'e}la Bollob{\'a}s, and L{\'a}szl{\'o} Lov{\'a}sz,
  \emph{Proving integrality gaps without knowing the linear program}, FOCS,
  2002, pp.~313--322.

\bibitem[ABLT06]{ABLT2006}
Sanjeev Arora, B{\'e}la Bollob{\'a}s, L{\'a}szl{\'o} Lov{\'a}sz, and Iannis
  Tourlakis, \emph{Proving integrality gaps without knowing the linear
  program}, Theory of Computing \textbf{2} (2006), 19--51.

\bibitem[BFPS12]{BraunFPS12}
G{\'a}bor Braun, Samuel Fiorini, Sebastian Pokutta, and David Steurer,
  \emph{Approximation limits of linear programs (beyond hierarchies)}, FOCS,
  2012, pp.~480--489.

\bibitem[BGMT12]{BenabbasGMT12}
Siavosh Benabbas, Konstantinos Georgiou, Avner Magen, and Madhur Tulsiani,
  \emph{{SDP} gaps from pairwise independence}, Theory of Computing \textbf{8}
  (2012), no.~12, 269--289.

\bibitem[BM13]{BravermanMoitra13}
Mark Braverman and Ankur Moitra, \emph{An information complexity approach to
  extended formulations}, STOC, ACM, 2013, pp.~161--170.

\bibitem[BPZ15]{BPZ15}
G\'{a}bor Braun, Sebastian Pokutta, and Daniel Zink, \emph{Inapproximability of
  combinatorial problems via small {LP}s and {SDP}s}, STOC (New York, NY, USA),
  ACM, 2015, pp.~107--116.

\bibitem[Cha02]{Chang02}
Mei-Chu Chang, \emph{A polynomial bound in {F}reiman's theorem}, Duke Math. J.
  \textbf{113} (2002), no.~3, 399--419. \MR{1909605 (2003d:11151)}

\bibitem[CMM09]{charikar2009integrality}
Moses Charikar, Konstantin Makarychev, and Yury Makarychev, \emph{Integrality
  gaps for {S}herali-{A}dams relaxations}, STOC, ACM, 2009, pp.~283--292.

\bibitem[FdlVKM07]{FernandezKenyonMathieu07}
Wenceslas Fern{\'a}ndez de~la Vega and Claire Kenyon-Mathieu, \emph{Linear
  programming relaxations of {M}axcut}, SODA, 2007, pp.~53--61.

\bibitem[FFGT11]{FaenzaFioriniGrappeTiwary11}
Yuri Faenza, Samuel Fiorini, Roland Grappe, and Hans~Raj Tiwary, \emph{Extended
  formulations, non-negative factorizations and randomized communication
  protocols}, \href{http://arxiv.org/abs/1105.4127}{arXiv:1105.4127}, 2011.

\bibitem[FMP{\etalchar{+}}12]{FMPTW12}
Samuel Fiorini, Serge Massar, Sebastian Pokutta, Hans~Raj Tiwary, and Ronald
  de~Wolf, \emph{Linear vs. semidefinite extended formulations: exponential
  separation and strong lower bounds}, STOC, 2012, pp.~95--106.

\bibitem[FSP13]{FSP14}
Hamza Fawzi, James Saunderson, and Pablo~A. Parrilo, \emph{Equivariant
  semidefinite lifts and sum-of-squares hierarchies},
  \href{http://arxiv.org/abs/1312.6662}{arXiv:1312.6662}, 2013.

\bibitem[Gri01]{Grigoriev2001}
Dima Grigoriev, \emph{Linear lower bound on degrees of {P}ositivstellensatz
  calculus proofs for the parity}, Theoret. Comput. Sci. \textbf{259} (2001),
  no.~1-2, 613--622. \MR{1832812 (2002e:03093)}

\bibitem[GW95]{GoemansWilliamson95}
Michel~X. Goemans and David~P. Williamson, \emph{Improved approximation
  algorithms for maximum cut and satisfiability problems using semidefinite
  programming}, Journal of the ACM \textbf{42} (1995), 1115--1145.

\bibitem[H{\aa}s01]{Hastad01}
Johan H{\aa}stad, \emph{Some optimal inapproximability results}, Journal of the
  ACM \textbf{48} (2001), no.~4, 798--859. \MR{2144931 (2006c:68066)}

\bibitem[IMR14]{IMR14}
Russell Impagliazzo, Cristopher Moore, and Alexander Russell, \emph{An entropic
  proof of {C}hang's inequality}, SIAM J. Discrete Math. \textbf{28} (2014),
  no.~1, 173--176. \MR{3162401}

\bibitem[KMSY14]{kol2014approximate}
Gillat Kol, Shay Moran, Amir Shpilka, and Amir Yehudayoff, \emph{Approximate
  nonnegative rank is equivalent to the smooth rectangle bound.}, Electronic
  Colloquium on Computational Complexity (ECCC), vol.~21, 2014, p.~46.

\bibitem[KPT10]{KaibelPT10}
Volker Kaibel, Kanstantsin Pashkovich, and Dirk~Oliver Theis, \emph{Symmetry
  matters for the sizes of extended formulations}, IPCO, 2010, pp.~135--148.

\bibitem[Las01]{Laserre01}
Jean~B. Lasserre, \emph{Global optimization with polynomials and the problem of
  moments}, SIAM J. Optim. \textbf{11} (2000/01), no.~3, 796--817. \MR{1814045
  (2002b:90054)}

\bibitem[Lau03]{laurent2003comparison}
Monique Laurent, \emph{A comparison of the {S}herali-{A}dams,
  {L}ov{\'a}sz-{S}chrijver, and {L}asserre relaxations for 0-1 programming},
  Math. Oper. Res. (2003), 470--496.

\bibitem[LRS15]{LRS15}
James~R. Lee, Prasad Raghavendra, and David Steurer, \emph{Lower bounds on the
  size of semidefinite programming relaxations}, STOC (New York, NY, USA), ACM,
  2015, pp.~567--576.

\bibitem[LRST14]{LRST14}
James~R. Lee, Prasad Raghavendra, David Steurer, and Ning Tan, \emph{On the
  power of symmetric {LP} and {SDP} relaxations}, CCC, IEEE, 2014.

\bibitem[LS91]{LovaszSchrijver1991}
L{\'a}szl{\'o} Lov{\'a}sz and Alexander Schrijver, \emph{Cones of matrices and
  set-functions and {$0$}-{$1$} optimization}, SIAM J. Optim. \textbf{1}
  (1991), 166--190.

\bibitem[McD98]{McD98}
Colin McDiarmid, \emph{Concentration}, Probabilistic methods for algorithmic
  discrete mathematics, Algorithms Combin., vol.~16, Springer, Berlin, 1998,
  pp.~195--248. \MR{1678578 (2000d:60032)}

\bibitem[Par00]{Parrilo00}
Pablo Parrilo, \emph{Structured semidefinite programs and semialgebraic
  geometry methods in robustness and optimization}, Ph.D. thesis, California
  Institute of Technology, 2000.

\bibitem[Pas12]{Pashkovich12}
Kanstantsin Pashkovich, \emph{Extended formulations for combinatorial
  polytopes}, Ph.D. thesis, Magdeburg Universit\"at, 2012.

\bibitem[Rot14]{Rothvoss14}
Thomas Rothvo{\ss}, \emph{The matching polytope has exponential extension
  complexity}, STOC (New York, NY, USA), ACM, 2014, pp.~263--272.

\bibitem[SA90]{SheraliAdams1990}
Hanif~D. Sherali and Warren~P. Adams, \emph{A hierarchy of relaxations between
  the continuous and convex hull representations for zero-one programming
  problems}, SIAM J. Discrete Math. \textbf{3} (1990), 411--430.

\bibitem[Sch03]{AS03}
Alexander Schrijver, \emph{Combinatorial optimization. {P}olyhedra and
  efficiency. {V}ol. {A}}, Algorithms and Combinatorics, vol.~24,
  Springer-Verlag, Berlin, 2003, Paths, flows, matchings, Chapters 1--38.
  \MR{1956924 (2004b:90004a)}

\bibitem[Sch08]{schoenebeck2008linear}
Grant Schoenebeck, \emph{Linear level {L}asserre lower bounds for certain
  k-{CSP}s}, FOCS, IEEE, 2008, pp.~593--602.

\bibitem[Tre12]{Trevisan2012}
Luca Trevisan, \emph{Max cut and the smallest eigenvalue}, SIAM J. Comput.
  \textbf{41} (2012), no.~6, 1769--1786. \MR{3029271}

\bibitem[Vaz01]{VaziraniBook01}
Vijay~V. Vazirani, \emph{Approximation algorithms}, Springer-Verlag, Berlin,
  2001. \MR{1851303 (2002h:68001)}

\bibitem[WS11]{WilliamsonShmoysBook11}
David~P. Williamson and David~B. Shmoys, \emph{The design of approximation
  algorithms}, Cambridge University Press, Cambridge, 2011.

\bibitem[Yan91]{Yannakakis91}
Mihalis Yannakakis, \emph{Expressing combinatorial optimization problems by
  linear programs}, J. Comput. System Sci. \textbf{43} (1991), no.~3, 441--466.

\end{thebibliography}

\newpage

\appendix

\section{What is Sherali--Adams?}\label{what-is-sheraliadams}

Our definition of Sherali--Adams relaxation differs from the definition in prior works (in particular, the works that proved lower bounds on the size of Sherali-Adams relaxations for approximating CSPs) \cite{SheraliAdams1990,FernandezKenyonMathieu07, charikar2009integrality}.
This discrepancy stems from the fact that traditionally LP hierarchies like Sherali--Adams are applied to integer linear programming formulations of a problem, whereas our relaxations can be viewed as applying the analogous reasoning to a more direct formulation of the problem.
It turns out that the latter approach typically leads to relaxations that are easier to describe and a-priori more powerful.

We will argue that the two versions of Sherali--Adams are equivalent for the problems we consider, in the sense that each relaxation in one hierarchy is captured by a relaxation in the other hierarchy of comparable size (at most a polynomial factor more constraints).

We remark that our relaxations are equivalent to the viewpoint of Sherali--Adams as a collection of mutually-consistent ``local distributions over assignments.''
\Dnote{TODO: maybe better to explain this term instead of using square quotes} This viewpoint was used in previous works for proving lower bounds.
These previous works show that this viewpoint captures the power of Sherali--Adams.
We will argue that this viewpoint is indeed equivalent to the Sherali--Adams hierarchy.

\subsection{Edge-based Sherali--Adams relaxations for \maxcut}\label{edge-based-sheraliadams-relaxations-for-max-cut}

The \emph{cut polytope} \({\mathrm{CUT}}_n\) is the convex hull of all vectors \(y\in \{ 0,1 \}^{\binom{n}{2}}\) such that there exists a bipartition \(x\in\{\pm 1\}^n\) with \(y_{i,j}=\Ind_{\{x_i\neq x_j\}}\) for all \(i\neq j\in [n]\).
We can formulate \maxcut as the problem of optimizing a linear function of the form \(\sum_{ij\in E(G)} y_{i,j}\) over \({\mathrm{CUT}}_n\) for a graph \(G\).
The standard LP relaxation of \({\mathrm{CUT}}_n\) is the \emph{metric polytope} \({\mathrm{METRIC}}_n\), which consists of all vectors \(y\in [0,1]^{\binom{n}{2}}\) that satisfy the inequalities \(y_{i,j} \leq y_{i,k} + y_{k,j}\) and \(y_{i,j} + y_{i,k} + y_{k,j} \leq 2\) for all \(i,j,k\in [n]\).
This \(O(n^3)\)-size LP relaxation corresponds to an exact integer linear programming (ILP) formulation in the sense that the convex hull of the integer vectors \({\mathrm{METRIC}}_n\cap \{0,1\}^{\binom n 2}\) is precisely the cut polytope \({\mathrm{CUT}}_n\).
In our notation, the level-\(r\) Sherali--Adams relaxation of this ILP formulation consists of all linear functionals \(\pE\) on \(L_2(\{0,1\}^{\binom n 2})\) such that \(\pE1=1\) and \(\pE f\cdot \ell \ge 0\) for every non-negative \(r\)-junta \(f\) on \(\{0,1\}^{\binom n 2}\) and every linear function \(\ell\) on \(\{0,1\}^{\binom n 2}\) corresponding to one of the defining linear inequalities of \({\mathrm{METRIC}}_n\), i.e., \(\ell\) is of the form \(y_{i,j}\), \(1-y_{i,j}\), \(y_{i,k} + y_{k,j} - y_{i,j}\), or \(2-y_{i,j}- y_{i,k} -y_{k,j}\) for some \(i,j,k\in [n]\).
The value of the level-\(r\) Sherali--Adams relaxation for a \maxcut instance \(G\) is the maximum value of \(\pE \sum_{ij\in E(G)} y_{i,j}\) over all linear functionals \(\pE\) that satisfy the previous conditions.
(From our description it is not immediately clear that this optimization problem has a small linear programming formulation.
However note that \(n^{O(r)}\) linear inequalities are enough to define the set of all admissible linear functionals \(\pE\).
Hence, we can reduce this problem to a linear program of size \(n^{O(r)}\).
It's also possible, but somewhat cumbersome, to describe this small linear program explicitly \cite{SheraliAdams1990, FernandezKenyonMathieu07, charikar2009integrality}.)

\subsubsection{Why is this hierarchy of relaxations equivalent to the previously described hierarchy?}\label{why-is-this-hierarchy-of-relaxations-equivalent-to-the-previously-described-hierarchy}

Let \(G\) be any graph.
First, consider any \(k\)-local pseudo-expectation \(\pE_x\) as defined before.
We will construct an equivalent linear functional \(\pE_y\) for the
level-\(r\) Sherali--Adams relaxation with \(r = k/2-3\).
Recall that \(\pE_x\) is a linear functional on \(L_2(\{\pm 1\}^n)\) such that \(\pE_x 1 =1\) and \(\pE_x f \ge 0\) for every nonnegative \(k\)-junta \(f\).
We define a linear functional \(\pE_y\) on \(L_2(\{0,1\}^{\binom n 2})\) as follows, \[
\pE\nolimits_y f = \pE\nolimits_x (f \circ \varphi)
\mcom
\] where \(\varphi\) is the function that maps any bipartition \(x\in\{\pm 1\}^n\) to the corresponding vector \(\{0,1\}^{\binom n 2}\), i.e., \(\varphi(x)_{i,j}=\Ind_{\{x_i\neq x_j\}}\).
(Note that algebraically \(\varphi(x)_{i,j}=(1-x_ix_j)/2\).)
This linear functional satisfies \(\pE_y1=\pE_x 1=1\) and \(\pE_y \sum_{i,j \in E(G)} y_{i,j}=\pE_x \sum_{ij \in E(G)} (1-x_ix_j)/2\).
Consider any nonnegative \(r\)-junta \(f\) over \(\{0,1\}^{\binom n 2}\) and any facet defining linear inequality \(\{\ell \ge 0\}\) for \({\mathrm{METRIC}}_n\).
We are to show \(\pE_y f \cdot \ell = \pE_x (f \circ \varphi) \cdot (\ell \circ \varphi)\ge 0\).
Since \(\{\ell\ge 0\}\) is a valid inequality for the vertices of \({\mathrm{CUT}}_n\), we have \(\ell \circ \varphi\ge 0\) over \(\{\pm 1\}^n\).
Therefore, \((f \circ \varphi) \cdot (\ell \circ \varphi)\) is nonnegative over \(\{\pm 1\}^n\).
Notice that each facet defining linear inequality \(\{\ell \ge 0\}\)
for \({\mathrm{METRIC}}_n\) depends only on three $y_{ij}$ variables,
and therefore $\ell \circ \phi$ depends on at most $6$ of the
variables $\{x_1,\ldots, x_n\}$.  Therefore, the function $(f \circ
\phi) \cdot (\ell \circ \phi)$ depends on at most \(2r+6\le k\) of the
variables $\{x_1,\ldots,x_n\}$.
It follows that \(\pE_x (f\circ \varphi) \cdot (\ell \circ \varphi)\ge 0\) as required.

Next, consider any linear functional \(\pE_y\) for the level-\(r\) Sherali--Adams relaxation.
We will construct an equivalent \(k\)-local pseudo-expectation \(\pE_x\) for \(k=r\).
We define \(\pE_x\) as follows, \[
\pE\nolimits_x f = \pE\nolimits_y (f \circ \psi)\mcom
\] where \(\psi(y)_i=1\) if \(y_{i,1}=0\) and \(\psi(y)_i=-1\) if \(y_{i,1}=1\).
In words, we assign \(1\) to all vertices \(i\) on the same side of the bipartition as vertex \(1\) and \(-1\) to all vertices on the other side.
(Algebraically, \(\psi\) is defined by \(\psi(y)_i=1-2y_{i,1}\).)
This linear functional satisfies \(\pE_x 1=\pE_y 1 = 1\) and \(\pE_x f \ge 0\) for every nonnegative \(k\)-junta \(f\) (because \(f\circ \psi\) is also a nonnegative \(k\)-junta).
It remains to show that \(\pE_x \sum_{ij\in E(G)} (1-x_ix_j)/2=\pE_y \sum_{ij \in E(G)} y_{i,j}\).
By our construction of the functional \(\pE_x\), we have \(\pE_x (1-x_ix_j)/2=\pE_y (y_{i,1}-y_{j,1})^2\) and thus it's enough to establish \(\pE_y (y_{i,1}-y_{j,1})^2- y_{i,j}=0\).
To simplify notation let us assume \(i=2\) and \(j=3\).
Let \(\Ind_{000},\ldots,\Ind_{111}\) be the indicators for the eight possible assignments for the variables \(y_{1,2},y_{1,3},y_{2,3}\).
Since \(1=\Ind_{000}+\cdots+\Ind_{111}\) as functions over \(\{0,1\}^{\binom{n}2}\), it is enough to verify that \(\pE_y \Ind_{abc}\cdot \left ( (y_{1,2}-y_{1,3})^2-y_{1,2} \right )=0\) for all \(a,b,c\in \{0,1\}\).
Note that the identity \((y_{1,2}-y_{1,3})^2=y_{2,3}\) holds if \(y\) is one of the vertices of \({\mathrm{CUT}}_n\).
(In words, vertices \(2\) and \(3\) are on different sides of the bipartition if and only if exactly one of them is on the same side as vertex \(1\).)
We claim that either \(\pE_y \Ind_{abc}=0\) or \((a,b,c)\in {\mathrm{CUT}}_3\cap \{0,1\}^3\).
This claim implies the desired identity, \[
\begin{aligned}
\pE_y (y_{1,2}-y_{1,3})^2- y_{2,3}
& = \sum_{a,b,c\in\{0,1\}} \pE_y \Ind_{abc}\cdot \left ( (y_{1,2}-y_{1,3})^2- y_{2,3} \right )\\
& = \sum_{a,b,c\in\{0,1\}} \pE_y \Ind_{abc}\cdot \left ( (a-b)^2- c \right )\\
& = \sum_{(a,b,c)\in {\mathrm{CUT}}_3 \cap \{0,1\}^3} \pE_y \Ind_{abc}\cdot \left ( (a-b)^2- c \right )\quad \text{(by claim)}\\
& = 0\quad \\
\end{aligned}
\] (The second step uses that \(\Ind_{abc}\cdot \left ( (y_{1,2}-y_{1,3})^2-y_{2,3} \right )=\Ind_{abc}\cdot \left ( (a-b)^2-c \right )\) for all \(y\).)
It remains to prove the claim.
Since \({\mathrm{CUT}}_n \cap \{0,1\}^n={\mathrm{METRIC}}_n \cap \{0,1\}^n\), it is enough to show that for every \(a,b,c\in \{0,1\}\) and every valid linear inequality \(\{\ell\ge 0\}\) for \({\mathrm{METRIC}}_3\) either \(\pE_y \Ind_{abc}=0\) or \(\ell(a,b,c)\ge 0\).
Indeed, since \(\Ind_{abc}\) is a nonnegative \(3\)-junta and \(\{\ell(y_{1,2},y_{1,3},y_{2,3})\}\) is a valid linear inequality for \({\mathrm{METRIC}}_n\), \[
0 \le \pE_{y} \Ind_{abc} \cdot \ell(y_{1,2},y_{1,3},y_{2,3})
=\pE_{y} \Ind_{abc} \cdot \ell(a,b,c)\mcom
\] which means that either \(\pE_y \Ind_{abc}=0\) or \(\ell(a,b,c)\ge 0\).
(The second step uses that \(\Ind_{abc}\cdot \ell(y_{1,2},y_{1,3},y_{2,3})=\Ind_{abc}\cdot\ell(a,b,c)\) for all \(y\).)

\Dnote{TODO: add some (short) discussion about different versions of Sherali--Adams for general MAX CSPs?}

\end{document}

Let $\inst$ be a Sherali--Adams integrality gap instance of \maxcut.

Suppose $\{X_{S}\}_{S \in [n], |S| \leq t }$ is the $t$-round
Sherali--Adams solution on $\inst$.

Construct a new graph $\inst' = \inst_1 \cup \inst_2$ consisting of two disjoint copies of the
instance $\inst$.  Let $N = 2n$ denote the number of vertices of
$\inst'$.
Let us suppose that $\inst_1$ is on vertices $\{1,\ldots, n\}$ and
$\inst_2$
is on $\{n+1, \ldots, 2n\}$.

We will now extend the Sherali--Adams LP solution for $\inst$ to a Sherali--Adams
solution for $\inst'$.  Roughly speaking, we will copy the SA-solution as
is on to $\inst_1$ and negate all its values on $\inst_2$.  In other words, the
Sherali--Adams solution on $\inst'$ is so designed that every pair of
vertices $x_i, x_{n+i}$ always have opposite values.

Formally, for
any subset $S \sse [2n]$ define $S_1 = S \cap \{1,\ldots, n\}$ and
$S_2
= \{ i-n \mid i \in S \setminus S_1 \}$.  Then, we can describe the SA solution
$\{Y_{S}\}_{S \sse [2n], |S| \leq t}$ as follows,
$$   Y_{S} =  (-1)^{|S_2|} X_{S_1 \symdif S_2} $$
Here  $S_1 \symdif S_2$ is the symmetric difference between the two sets.

\begin{definition}
	For a set $S \sse [2n]$ of the form $S = S_1 \cup (S_2 + n)$
	where $S_1, S_2 \sse [n]$, define $\wt(S) = |S_1 \symdif
	S_2|$.
\end{definition}

More generally, we will define $\{ Y_{S} \}_{S \sse [2n]}$ for all
subsets $S \sse [2n]$.
$$ Y_S =
\begin{cases}
	(-1)^{|S_2|}X_{S_1 \symdif S_2} & \text{ if } \wt(S) \leq t \\
         0 & \text{ otherwise}
\end{cases}
$$

\begin{observation} \label{obs:wt}
	For any set $S$, $\wt(S \symdif \{i\}) = \wt(S \symdif
	\{n+i\})$.
\end{observation}
\begin{proof}
	It is easy to see that both values are equal to the hamming weight of
	the $\mathbb{F}_2$ vector $\Ind_{S_1} \oplus \Ind_{S_2} \oplus e_i$ where
	$\Ind_{S_1}, \Ind_{S_2} \in \mathbb{F}_2^n$ are indicators of subsets
	$S_1,S_2$ and $e_i \in \mathbb{F}_2^n$ is the $i^{th}$ standard basis vector.
\end{proof}

Define a function $H : \sbits^N \to \R$ as
\begin{equation}\label{eq:prH}
H(x) = \sum_{S \sse [N]} Y_S \chi_S(x)\,,
\end{equation}
where $\{\chi_S\}$ is the Fourier basis over $\sbits^N$.

\begin{lemma} \label{lem:cancellation}
	For every $S \sse [2n]$ and all $i \in [n]$, $Y_{S\symdif \{i\}} = -Y_{S \symdif
		\{n+i\}}$.
\end{lemma}
\begin{proof}
Let $S = S_1 \cup (S_2 + n)$ for some $S_1, S_2 \sse [n]$.
	By Observation \prettyref{obs:wt}, $\wt(S \symdif \{i\}) = \wt(S \symdif
	\{ n+i \})$.  If $\wt(S \symdif \{i\}) =  \wt(S \symdif
	\{ n+i \}) > t$, then by definition we will have $Y_{S \symdif
		\{i\}} = Y_{S \symdif
		\{n+i\}} = 0$.

	On the other hand, if $\wt(S \symdif \{i\}) =  \wt(S \symdif
	\{ n+i \}) \leq t$ then,
	$$ Y_{S \symdif \{i\}} = (-1)^{|S_2|} X_{ (S_1 \symdif
		\{i\}) \symdif S_2} $$
	and
	$$ Y_{S \symdif \{n+i\}} = (-1)^{|S_2 \symdif \{i\}|} X_{ S_1 \symdif
		(\{i\} \symdif S_2)} $$
	Therefore, also in this case $Y_{S \symdif \{i\}} = -Y_{S
		\symdif \{n+i\}}$.
\end{proof}

\begin{lemma}
	For any polynomial $p(x_1,\ldots, x_N)$, we have
	$\iprod{H,(\sum_{i=1}^N x_i)p} = 0$.
\end{lemma}
\begin{proof}
By linearity of the inner product, it is sufficient to prove the above claim
when $p$ is a monomial.  Recall that,
$$ \iprod{H, \left(\sum_i x_i\right)p} = \E_{x \in \sbits^N}
\left[H(x)\left(\sum_i x_i\right) p(x) \right] $$
Since $x$ takes values in $\sbits^N$, it is sufficient to show the
above claim for the elements of the Fourier basis  $\{ \chi_S \}$.

Fix a monomial $\chi_S(x)$.  We can write the above inner product as,
\begin{align*}
\left\langle \sum_{A \sse [N]} Y_A \chi_A(x), \chi_S(x) \Paren{\sum_i x_i}\right\rangle
& = \E_{x \in \sbits^N} \left[ \chi_S(x) \sum_{A \sse [N], i \in [2n]}
Y_A \chi_{A \symdif \{i\}}(x) \right] \\
&= \E_{x \in \sbits^N} \left[ \chi_S(x) \sum_{B \sse [N]}
\chi_B(x) \left(\sum_{ i \in [2n]}
Y_{B \symdif \{i\}} \right) \right] \\
& = 0 \quad \because{  $\sum_{ i \in [2n]}
Y_{B \symdif \{i\}} = 0$ since $Y_{B \symdif \{i\}} = -Y_{B \symdif \{n+i\}}$}
\end{align*}
\end{proof}

\begin{lemma} \label{lem:nonnegative}
	If $f : \sbits^N \to \R_{\geq 0}$ is a function that depends on a
	subset $J \sse [N]$ of at most $t$ coordinates and possibly
the value $\sum_{i=1}^N x_i$, then
	$$ \iprod{H, f} \geq 0\,.$$
\end{lemma}
\begin{proof}
	Write the function $f$ as a polynomial in $x_J = \{x_i \mid i \in
	J\}$ and $\sum_i x_i$ as follows,
	$$ f = p_0(x_J) + \sum_{i = 1}^{N} p_i(x_J) \left(\sum_i x_i
	\right)^i. $$
	Using \prettyref{lem:cancellation}, we have
	$$ \iprod{H, f} = \iprod{H, p_0}\,.$$
	Since $p_0$ depends on at most $t$ coordinates, by Observation \prettyref{obs:Sherali--Adams} we can write,
	$$ \iprod{H, p_0} = \E_{ x_J \sim \mu_{J} } [p_0(x_J)]\,,$$
    where $\mu_J$ is some distribution on $x_J$.

	Define a distribution $\mu$ on $\sbits^N$ as follows: Sample $x_J$ from
	$\mu_J$ and then sample $x_{\bar{J}}$ uniformly randomly from
	among all assignments that satisfy $\sum_{i=1}^N x_i = 0$.
	This is feasible since $|J| = t < n/2$.  Note that the
	distribution $\mu$ is supported entirely on the set $\{ x \in
		\sbits^N \mid \sum_i x_i
	= 0\}$.
Therefore, we have
	\begin{eqnarray*}
	\iprod{H, p_0} & = &\E_{ x_J \sim \mu_{J} } [p_0(x_J)] \\
&=& \E_{x \sim \mu}
	[p_0(x_J)] \\
	& = &\E_{x \sim \mu} \left[ p_0(x_J) + \sum_{i = 1}^{N} p_i(x_J) \left(\sum_i x_i
		\right)^i \right] \quad \because{$\sum_i x_i = 0$ on the
		support of $\mu$}\\
	& =& \E_{x \sim \mu} [f(x)] \geq 0  \quad\qquad\qquad\qquad\qquad \because{$f$ is
	pointwise non-negative}\qedhere
	\end{eqnarray*}
\end{proof}

    Consider a graph $\inst$ on $n=N/2$ nodes with $\sa_t(\inst) > c$, and let
$\inst'$ be the graph obtained by taking two copies of $\inst$ as discussed before.
    Then $\inst'$ has the property that $\sa_t(\inst')=\sa_t(\inst)$
    and $\opt(\inst')=\opt(\inst)$.
   Let $H$ be the corresponding functional defined in \eqref{eq:prH}.

    Let us consider $\inst'$ as a function on $\sbits^N$ assigning
    cuts to their \maxcut value in $\inst'$.
	Suppose we can express
	express,
	$$ c - \inst' = \sum_{i=1}^M \lambda_i f_i\,,$$
	wherein $\lambda_i \geq 0$.  Taking inner product with the
	functional $H$ on both sides yields
	$$ \iprod{H, c- \inst'} = c - \sa_t(\inst') < 0$$
	while,
	$$ \iprod{H, f_i} \geq 0  \quad \forall i \quad \text{ by
		\prettyref{lem:nonnegative}} \mcom $$
	a contradiction.

\subsection{Truncation}

\begin{definition}
Let $f$ be a distribution on $\sbits^n$ and a set $B \sse \sbits^n$
define a distribution $T_B f$ as,
$$ T_B f(x) = \begin{cases} f(x) + E_{x}[1_B(x) f(x)] & x \notin B \\
	E_{x}[1_{B}(x)f(x)] & x \in B  \end{cases} $$
\end{definition}
From the above definition, it is easy to check that $E[T_B f] = E[f]$.
Notice that for a non-negative function $f$, $T_B f$ is always lower
bounded by $\E_{x}[1_B(x) f(x)]$.  Moreover the functions $f$ and $T_B
f - \E_{x}[1_B(x) f(x)]$ differ only on the set $B$, i.e., the
function $f_B$ given by
$$ f_B =  f - (T_B f - \E[1_{B}(x) f(x)]) $$
is supported only on $B$.

\Pnote{this claim looks ugly}
\begin{claim}
For any function $H : \sbits^n \to \R$ and a subset $B \sse \sbits^n$,
$$ \iprod{1_{\bar{B}}H,f} = \iprod{H, T_B f - \E[1_{B}(x) f(x)]}
$$
\end{claim}
\begin{proof}
	\begin{align*}
		\iprod{1_{\bar{B}}H,f}
		&= \iprod{1_{\bar{B}}H, T_B f - \E[1_{B}(x) f(x)]}
		+ \iprod{1_{\bar{B}}H,f_B} \\
		&= \iprod{1_{\bar{B}}H, T_B f - \E[1_{B}(x) f(x)]}
			\quad \because{ $\supp(f_B) \cap
				\supp(1_{\bar{B}}H) = 0$} \\
		&= \iprod{H, T_B f - \E[1_{B}(x) f(x)]}
			\quad \because{ $\supp(T_B f - \E[1_{B}(x)
			f(x)]) = B$}
	\end{align*}

\end{proof}

Given the family of distributions $\cF = \{f_1,f_2,\ldots, f_M\}$.
Let $B_i$ denote the set of ``bad points'' wherein the value of $f_i$
are exceptionally large.  Formally, define $B_i$ to be
$$ B_i = \{ \text{set of } \lceil 2^{n}/n^t \rceil \text{ points at which } f_i \text{
is largest.} \}$$
Let $B = \cup_{i \in M} B_i$.

\begin{claim}
	For each distribution $f_i \in \cF$, the distribution $T_B f$
	satisfies $T_B f_i(x) \leq n^t + 1$ for all $x \in \sbits^n$.  In particular, $H(T_{B}
	f_i) \geq n - 2t\log{n}$ for all $i \in [M]$.
\end{claim}
\begin{proof}
	By definition of the set $B$, it contains the $2^{n}/n^t$
	largest values of $f_i$.  Since $E[f] = 1$ this implies that
	$f(x) \leq n^t$ for all $x \notin B$.  Further, we have $E[1_B f] \leq \E[f] = 1$
       from which it follows that $T_B f_i(x) \leq n^t + 1$ $\forall
       x \in \sbits^n$.
\end{proof}

Let $\cF' = \{g_1,\ldots, g_M\}$ be a family of distributions over
$\sbits^n$ such that each $g_i$ has high entropy.  Specifically, let
us assume $H(g_i) \geq n -2t \log{n}$ for some $t$.

For a distribution $g$ on $\sbits^n$ and a subset $A \sse [n]$, let
$g_{|A}$ denote the marginal distribution on $A$.  Let $\cF'_{|A}$ denote the family of
distributions $\cF'$ restricted to the set of coordinates in $A$.

\begin{lemma}
For every choice of $m,B,t,d,\beta, n$ and $M$ satisfying
	$$ m B^2 d^2/\beta^2 \leq
	\sqrt{2n} \qquad \text{ and } M < n^t$$
the following holds.
Let $\cF' = \{g_1,\ldots, g_M\}$ be a family of distributions each of
which has entropy $\geq n - B$.  If a set $A \sse [n]$ is
chosen including each element independently with probability $m/n$,
then with constant probability all of the following hold:
\begin{itemize}
	\item $|A| \in [m/2, 2m]$.

	\item  For each $g \in \cF'$, there exists a set of $2t$
		coordinates $J_g \sse A$ such that under
		the distribution $g$ all $d$-wise
		correlations in $A - J_g$ are small.  Quantitatively, we will have
		$$ \E_{S} [ g(x) \chi_S(x)] \leq O(\sqrt{\beta}) \qquad \forall S
		\not\subset J_g, |S| \leq d $$
\end{itemize}
\end{lemma}
\begin{proof}
	First, by Hoeffding's inequality we have $|A| \in [m/2,m]$
	with all but a probability exponentially small in $m$.

	By \prettyref{lem:dist-junta}, for each $g_i$ there exists a
	set $J_i$ of $O(\frac{B^2 d^2}{\beta^2})$ coordinates such
	that all correlations of up to $d$-bits not contained in $J_i$ are low.

	Clearly,$ \E[ |J_i \cap A| ] = \frac{m}{n} |J_i| < 1$.  By
\prettyref{obs:silly-bound}, we can write,
	$$ \Pr{ \left[|J_i \cap A| \geq  2t
	\right]} \leq 2 \left(\frac{m}{n} \cdot |J_i|\right)^{2t} =
	O\left(\left(\frac{m
		B^{2}d^2}{n\beta^2}\right)^{2t}\right) \leq 1/(2n)^t$$

	The lemma follows by taking a union bound over all the $M <
	n^t$ distributions in the family $\cF'$.

\end{proof}